\newtheorem{theorem}{Theorem}[section]
\newtheorem{lemma}[theorem]{Lemma}
\newtheorem{proposition}[theorem]{Proposition}
\theoremstyle{definition}
\theoremstyle{remark}
\newtheorem{remark}[theorem]{Remark}
\begin{document}

\title[Low Weight Codewords of Reed-Muller Codes]{On Low Weight Codewords of Generalized Affine and Projective Reed-Muller Codes}

\author{S. Ballet}
\address{Institut de Math\'{e}matiques
de Luminy\\ case 930, F13288 Marseille cedex 9\\ France}
\email{stephane.ballet@univmed.fr}
\author{R. Rolland}
\address{Institut de Math\'{e}matiques
de Luminy\\ case 930, F13288 Marseille cedex 9\\ France}
\email{robert.rolland@acrypta.fr}
\date{\today}
\keywords{code, codeword, finite field, generalized Reed-Muller code,
homogeneous polynomial, hyperplane, hypersurface, minimal distance,  next-to-minimal weight,
polynomial, projective Reed-Muller code, second distance,
weight}
\subjclass[2010]{94B27 \and 94B65 \and 11G25 \and 11T71}
\begin{abstract}
We propose new results on low weight codewords of affine and projective generalized Reed-Muller codes. 
In the affine case we prove that if the size of the working finite field is large compared to the degree
of the code,
the low weight codewords are products of affine functions. Then in the general case we study
some types of codewords and prove that they cannot be second, thirds or fourth weight depending on
the hypothesis.
In the projective case the second distance of generalized
Reed-Muller codes is estimated, namely a lower bound and an upper bound of this weight
are given. 
\keywords{code \and codeword \and finite field \and generalized Reed-Muller code \and
homogeneous polynomial \and hyperplane \and hypersurface \and minimal distance \and  next-to-minimal distance \and
polynomial \and projective Reed-Muller code \and second weight \and
weight}

\end{abstract}
\maketitle

\section{Introduction - Notations} 
This paper proposes a study on low weight codewords
of generalized Reed-Muller codes  and projective generalized  Reed-Muller codes of degree $d$,
defined over a finite field ${\mathbb F}_q$, 
called respectively GRM codes and PGRM codes.
It includes a focus on their minimum distances as well as the characterization of the codewords reaching
these weights. It also includes  a study of the second weight, namely the weight which is 
just above the minimal distance. The second weight is also called the next-to-minimum weight.

Determining the low weights of the Reed-Muller codes as well as the low weight codewords are
interesting questions related to various fields. Of course, from the point of view of coding theory, 
knowing something on the weight distribution
of a code, and especially on the low weights is a valuable information. From the point of view of 
algebraic geometry the problem is also related to
the computation of the number of rational points of hypersurfaces and in particular hypersurfaces
that are arrangements of hyperplanes. By means of incidence matrices, Reed-Muller codes are related 
to finite geometry codes (see \cite[5.3 and 5.4]{aske}). From this point of view, 
codewords have a geometrical interpretation
and can benefit from the numerous results in this area. Consequently there is a wide variety of 
concepts that may be involved. 

\medskip

Many results concerning this area are here and 
there in various papers. 
In this situation, a comprehensive overview is needed. This is what we do at first in Section \ref{overview}.

Section \ref{minimalw} is an overview on the minimal distance both in the affine case as in the projective case. 
Concerning PGRM codes, 
the second author characterized in \cite{roll1} the codewords of minimal
weights. But the proof given there is sketched. We give in this Section a more detailed proof.

Then in Section \ref{secondw} we recall some results concerning the second weight
an the codewords  of a GRM code reaching 
the second weight.
These codewords are now known. They were determined in \cite{eric}
and \cite{sbou}) for $1 \leq d \leq \frac{q}{2}$ and in \cite{ledu} for the general case.
It should be noted that these codewords are, as for the minimal codewords, products of affine functions.
Next we give new results on affine low weight codewords 
and we split the study in the three following parts:
\begin{itemize}
 \item in Section \ref{largeq} we give new results on low weight codewords in the case where
$q$ is large compared to $d$. We prove that all the configurations of $d$ distinct hyperplanes
have a weight that is lower than the weight of any hypersurface containing an irreducible (absolutely or not)
component of degree $\geq 2$;
\item in section \ref{lowgene} we study the general case and we compare the second, 
third an fourth weight to the weight of a
word which is irreducible but not absolutely irreducible;
\item in Section \ref{dppq} we study the important case where $d <q$ and we prove that 
under some hypothesis, a word which has a factor  irreducible but
not absolutely irreducible has a weight greater than the third weight or than the fourth weight,
depending on the hypothesis.
\end{itemize}

Next, in Section \ref{secondproj} we determine an upper bound and a lower bound for the 
second weight of a PGRM code.

\section{An overview}\label{overview}
\subsection{Polynomials and homogeneous polynomials}
Let ${\mathbb{F}}_q$ be the finite field with $q$ elements 
and $n \geq 1$ an integer. We denote respectively by ${\mathbb A}^n(q)$ and ${\mathbb P}^n(q)$ the affine space
and the projective space of dimension $n$ over ${\mathbb{F}}_q$. 

Let ${\mathbb{F}}_q[X_1,X_2,\cdots,X_n]$ be the algebra of  polynomials in $n$ variables
over ${\mathbb{F}}_q$. If $f$ is in ${\mathbb{F}}_q[X_1,X_2,\cdots,X_n]$ we denote by
$\deg(f)$ its total degree and by $\deg_{X_i}(f)$ its partial degree
with respect to the variable $X_i$. 

Denote by ${\mathcal F}(q,n)$ the space of functions from ${\mathbb{F}}_q^n$
into ${\mathbb{F}}_q$. It is known that any function in ${\mathcal F}(q,n)$ is a 
polynomial function. More precisely there is a surjective linear map $T$ from
${\mathbb{F}}_q[X_1,X_2,\cdots,X_n]$ onto ${\mathcal F}(q,n)$ mapping 
any polynomial on its associated polynomial function:
$$ 
\begin{array}{cccc}
T~: & {\mathbb{F}}_q[X_1,X_2,\cdots,X_n] & \rightarrow & {\mathcal F}(q,n)\\
    &  f                       & \mapsto & T(f)
 \end{array}
$$
where $T(f)(X)=f(X)$ is the evaluation of the polynomial function $f$ at
the point $X=(X_1,X_2,\cdots,X_n)$.
The map $T$ is not injective and has for kernel the ideal generated by 
the $n$ polynomials $X_i^q - X_i$:
$${\rm Ker}(T)=\left ( X_1^q -X_1, X_2^q -X_2,\cdots, X_n^q-X_n \right).$$
Any element of the quotient  ${\mathbb{F}}_q[X_1,X_2,\cdots,X_n]/{\rm Ker}(T)$ can be
represented by a unique reduced polynomial $f$, namely such that for any variable $X_i$ 
the following holds:
$$\deg_{X_i}(f) \leq q-1.$$
We denote by ${\mathcal RP}(q,n)$ the set of reduced polynomials in $n$ variables over ${\mathbb{F}}_q$.
Then, the map $T$ restricted to ${\mathcal RP}(q,n)$ is one to one, namely each function of
${\mathcal F}(q,n)$ can be uniquely represented by a reduced polynomial in ${\mathcal RP}(q,n)$.

Let $d$ be a positive integer. We denote by ${\mathcal RP}(q,n,d)$ the set of reduced polynomials $P$
such that $\deg(P) \leq d$. Remark that if $d \geq n(q-1)$ the set ${\mathcal RP}(q,n,d)$
is the whole set ${\mathcal RP}(q,n)$. 

Let ${\mathcal H}(q,n+1,d)$ the space of homogeneous polynomials in $n+1$ variables
over ${\mathbb{F}}_q$ with total degree $d$. The decomposition
$$
{\mathbb{F}}_q[X_0,X_1,X_2,\cdots,X_n]=\bigoplus_{d\geq 0} {\mathcal H}(q,n+1,d)
$$
provides ${\mathbb{F}}_q[X_0,X_1,X_2,\cdots,X_n]$ with a graded algebra structure. 
Let ${\mathcal J}_d$ be the subspace of polynomials
$f$ in ${\mathcal H}(q,n+1,d)$
such that $f(X)=0$ for any $X\in {\mathbb F}_q^{n+1}$ and denote by $J$ the homogeneous ideal
$$
{\mathcal J}=\bigoplus_{d\geq 0} {\mathcal J}_d.
$$
It is known (cf. \cite{mero} or \cite{reta}) that the ideal ${\mathcal J}$
is the homogeneous ideal generated by the polynomials $X_i^qX_j-X_iX_j^q$ where $0 \leq i <j \leq n$.

\subsection{Generalized Reed-Muller codes}
Let $d$ be an integer such that $1\leq d < n (q-1)$. 
The generalized Reed-Muller code (GRM code) of order $d$ over ${\mathbb F}_q$
is the following subspace of ${\mathbb{F}}_q^{(q^n)}$:
$$
{\rm RM}_q(d,n)=
\left\{\bigl( f(X) \bigr)_{X\in {\mathbb{F}}_q^n}
~| ~ f \in {\mathbb{F}}_q[X_1,\ldots,X_n] \hbox{ and } {\deg}(f) \leq d\right\}.
$$
It may be remarked that the polynomials $f$ determining this code are viewed 
as polynomial functions. Hence each codeword is associated with a unique 
reduced polynomial in ${\mathcal RP}(q,n,d)$.

Let us denote by $Z_a(f)$ the set of zeros of $f$ (where the index $a$ stands for ``affine'').
From a geometrical point of view $Z_a(f)$ is an affine algebraic 
hypersurface in ${\mathbb{F}}_q^n$ and the number of points $N_a(f)=\# Z_a(f)$
of this hypersurface (the number of zeros of $f$) is connected to the weight $W_a(f)$
of the associated codeword by the following formula:
$$W_a(f)=q^n-N_a(f).$$
The code ${\rm RM}_q(d,n)$ has the following parameters (cf. \cite{kalipe}, \cite[p. 72]{blmu})
(where the index $a$ stands for ``affine code''):
\begin{enumerate}
 \item length  $m_a(q,n,d)=q^n$,
 \item dimension  
$$ k_a(q,n,d)= \sum_{t=0}^d\sum_{j=0}^n (-1)^j 
\left( \begin{array}{c} n\\j \end{array} \right) 
\left( \begin{array}{c} t-jq+n-1\\t-jq \end{array} \right),$$
 \item minimum distance $ W_a^{(1)}(q,n,d)=(q-b) q^{n-a-1},$
where $a$ and $b$ are the quotient and the remainder
in the Euclidean division of $d$ by $q-1$, namely
$d=a (q-1)+b$ and $0 \leq b <q-1$.
\end{enumerate}
We denote by $N_a^{(1)}(q,n,d)$ the maximum number of zeros for a 
non-null polynomial function of degree $\leq d$ where $1 \leq d <n(q-1)$,
namely 
$$N_a^{(1)}(q,n,d)=q^n - W_a^{(1)}(q,n,d)=q^n-(q-b) q^{n-a-1}.$$

\begin{remark}
Be careful not to confuse symbols. With our notations,
the Reed-Muller code of order $d$ has length $m_a(q,n,d)$, dimension $k_a(q,n,d)$
and minimum distance $W_a^{(1)}(q,n,d)$. Namely it is an 
$$\left[m_a(q,n,d),k_a(q,n,d),W_a^{(1)}(q,n,d)\right]-\hbox{code}.$$
The integer $n$ is the number of variables of the polynomials
defining the words and the order $d$ is the
maximum total degree of these polynomials.
\end{remark}

The minimum distance of ${\rm RM}_q(d,n)$ was given by T. Kasami, S. Lin, W. Peterson in \cite{kalipe}.
The words reaching this bound were characterized by P. Delsarte, J. Goethals
and F. MacWilliams in \cite{degoma} and are described in the following theorem:

\begin{theorem}[Delsarthe, Goethals, McWilliams]\label{Del}
The maximum number of
rational points over ${\mathbb{F}}_q$, for an algebraic hypersurface $V$
of degree $d$ in the affine space of dimension $n$
which is not the whole space ${\mathbb{F}}_q^n$ is attained if and only if:
$$V=\left ( \bigcup_{i=1}^{a}\left(\textstyle{\bigcup}_{j=1}^{q-1}V_{i,j}\right) \right )
\left ( \bigcup_{j=1}^b W_j\right ) \hbox{ where } d=a(q-1)+b,
$$
with $0 \leq b <q-1$ and where the $V_{i,j}$ and $W_j$ are $d$ distinct hyperplanes defined
on ${\mathbb{F}}_q$ such that
for each fixed $i$ the $V_{i,j}$ are $q-1$ parallel
hyperplanes, the $W_j$ are $b$ parallel hyperplanes and
the $a+1$ distinct linear forms directing these hyperplanes are 
linearly independent.
\end{theorem}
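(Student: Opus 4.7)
\emph{Sufficiency.} I would first verify that the prescribed $V$ attains the claimed number of zeros. Since $L_1,\dots,L_{a+1}$ are linearly independent, I can extend them to coordinates on $\mathbb{F}_q^n$. Let $\alpha_i\in\mathbb{F}_q$ be the unique value of $L_i$ avoided by the bundle $\bigcup_{j=1}^{q-1}V_{i,j}$ for $1\le i\le a$, and let $D=\{d_1,\dots,d_b\}$ be the values of $L_{a+1}$ cut out by the $W_j$. Then $\mathbb{F}_q^n\setminus V$ consists of those $X$ with $L_i(X)=\alpha_i$ for $i\le a$ and $L_{a+1}(X)\notin D$, with the remaining $n-a-1$ coordinates free, giving exactly $(q-b)\,q^{n-a-1}$ points, i.e.\ $W_a^{(1)}(q,n,d)$.

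\emph{Necessity.} To show conversely that any $f\in\mathcal{RP}(q,n,d)$ achieving $N_a^{(1)}(q,n,d)$ zeros must factor as described, I would proceed by induction on $d$, the base case $d=1$ being immediate (a single hyperplane). For the inductive step, slice $\mathbb{F}_q^n$ by the $q$ parallel hyperplanes $L=c$, $c\in\mathbb{F}_q$, for a linear form $L$ to be chosen. Each restriction $f_{|L=c}$ is a polynomial in $n-1$ variables of degree $\le d$, contributing either $q^{n-1}$ zeros (when $f_{|L=c}\equiv 0$) or at most $N_a^{(1)}(q,n-1,d)$ zeros otherwise. Summing over $c$ and comparing with $N_a^{(1)}(q,n,d)$ forces a rigid pattern: if $b>0$, at least one slice must vanish identically, so $L-c_0\mid f$ for some $c_0$; if $b=0$, the target is reached only when exactly $q-1$ slices vanish identically, so a full parallel bundle $\prod_{c\ne\alpha}(L-c)$ divides $f$.

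\emph{Closing the induction.} Write $f=g\cdot h$ where $g$ is the factor just extracted and apply the inductive hypothesis to $h$, whose degree has dropped by $1$ or by $q-1$ according to the case. This yields a decomposition of $h$ as a union of bundles directed by $a$ or $a-1$ additional linearly independent forms, together with a trailing group of parallel hyperplanes. A final check that $L$ is linearly independent from the directions coming from $h$ is needed --- otherwise the zero count could not reach the required maximum --- and this produces the full set of $a+1$ independent directions and the claimed factorization of $f$.

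\emph{Main obstacle.} The delicate step is the $b=0$ case of the slicing argument: one must rule out mixed configurations where some but not all of the $q$ slices are identically zero, i.e.\ show that the only way to reach $\sum_{c}Z(f_{|L=c})=N_a^{(1)}(q,n,d)$ is to have exactly $q-1$ degenerate slices. This relies on exploiting the strict gap between the ``generic'' slice bound $N_a^{(1)}(q,n-1,d)$ and the ``degenerate'' value $q^{n-1}$, combined with a choice of $L$ (after a linear change of coordinates, if necessary) that saturates the inductive slice-wise inequality. The subsequent linear-independence verification, by contrast, is essentially a dimension count once the bundle structure is in hand.
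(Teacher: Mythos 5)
The paper does not actually prove Theorem \ref{Del}: it quotes the characterization from \cite{degoma} and points to \cite{ledu2} for a simpler proof, so there is no in-paper argument to compare with; your proposal must stand on its own. Its sufficiency half is fine (the complement of the configuration is the set where $l_i(X)=\alpha_i$ for $i\le a$ and $l_{a+1}(X)\notin D$, giving $(q-b)q^{n-a-1}$ points, and one quotes the Kasami--Lin--Peterson bound to know this is the maximum). The necessity half, however, has a genuine gap at exactly the point you flag as the ``main obstacle'': the slice count does \emph{not} force the rigid pattern you claim. Write $d=a(q-1)+b$ with $0\le b<q-1$ and $d<(n-1)(q-1)$; a slice on which $f$ does not vanish identically carries at most $q^{n-1}-(q-b)q^{n-a-2}$ zeros, and $q$ times this bound is exactly $N_a^{(1)}(q,n,d)$. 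So the maximum can be reached, as far as this count is concerned, with \emph{all} $q$ slices non-degenerate and each of them maximal in $n-1$ variables; indeed this is precisely what happens when you slice a maximal arrangement along a direction linearly independent of its $a+1$ directing forms. Hence ``if $b>0$ at least one slice vanishes identically'' and ``if $b=0$ exactly $q-1$ slices vanish'' are not consequences of the inequality; they hold only for a suitably chosen direction $L$, and producing such a direction (equivalently, proving that a maximal hypersurface must contain a hyperplane, i.e.\ that $f$ has a linear factor over $\mathbb{F}_q$) is the actual content of the Delsarte--Goethals--MacWilliams theorem. Deferring it to ``a choice of $L$ that saturates the inequality'' assumes what has to be proved.

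Two further points would need attention even after that is fixed. First, the induction as stated breaks down in the range $d\ge(n-1)(q-1)$, where a non-degenerate slice can have up to $q^{n-1}-1$ zeros and the formula $N_a^{(1)}(q,n-1,d)$ with the same pair $(a,b)$ no longer applies, so this range requires a separate treatment. Second, in the closing step you write $f=g\cdot h$ and apply the inductive hypothesis to $h$, but $Z_a(f)=Z_a(g)\cup Z_a(h)$ with overlap, so you must argue that $h$ (or its restriction to the complement of the extracted hyperplane or bundle) is itself maximal for its degree before the induction applies, and only then does the linear-independence check become the easy dimension count you describe. As a benchmark, the streamlined proof in \cite{ledu2} is organized around establishing the existence of the linear factor first; any correct write-up along your lines will have to supply an argument of that strength rather than the counting step as proposed.
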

A simpler proof than the original one is given in \cite{ledu2}.

\subsection{Projective generalized Reed-Muller codes}
The case of projective codes is a bit different, because homogeneous
polynomials do not define in a natural way functions on the projective space. 
Let $d$ be an integer such that $1 \leq d \leq n(q-1)$.
The projective generalized
Reed-Muller code of order $d$ (PGRM code) was introduced by G. Lachaud in \cite{lach1}.
Let $S$ a subset of ${\mathbb{F}}_q^{n+1}$ constituted by one point on each punctured
vector line of ${\mathbb{F}}_q^{n+1}$. Remark that any point of the projective space ${\mathbb P}^n(q)$ 
has a unique coordinate representation by an element of $S$. The projective
Reed-Muller code ${\rm PGRM}_q(n,d)$ of order $d$ over ${\mathbb P}^n(q)$ is constituted by the 
words $(f(X))_{X \in S}$
where $f \in {\mathcal H}(q,n+1,d)$ 
and the null word:
$$
{\rm PGRM}_q(n,d)=\left\{\bigl( f(X) \bigr)_{X\in S}
~| ~ f \in {\mathcal H}(q,n+1,d) \right\} \cup \{(0,\cdots,0)\}.
$$
This code is dependent on the set $S$
chosen to represent the points of ${\mathbb P}^n(q)$. But the main parameters are independent of
this choice. Following \cite{lach1} we can choose
$$S=\cup_{i=0}^{n} S_i,$$
where $S_i=\{(0,\cdots,0,1,X_{i+1}, \cdots, X_n)~|~X_k \in {\mathbb{F}}_q\}$. Subsequently, we
shall adopt this value of $S$ to define the code ${\rm PGRM}_q(n,d)$.

For a homogeneous polynomial $f$ let us denote by $Z_h(f)$ the set of zeros of $f$ in the
projective space ${\mathbb P}^n(q)$ (where the index $h$ stands for ``projective''). 
From a geometrical point of view, an element $f \in {\mathcal H}(q,n+1,d)$ defines a
projective hypersurface  $Z_h(f)$ in the projective space ${\mathbb P}^n(q)$. 
The number $N_h(f)=\# Z_h(f) $ of points
of this projective hypersurface
is connected to the weight $W_h(f)$ of the corresponding codeword by the following relation:
$$W_h(f)=\frac{q^{n+1}-1}{q-1} -N_h(f).$$

The parameters of ${\rm PGRM}_q(n,d)$ are the following (cf. \cite{sore2})
(where the index $h$ stands for ``projective code''):

\begin{enumerate}
 \item length  $m_h(q,n,d)=\frac{q^{n+1}-1}{q-1}$,
 \item dimension  
$$
k_h(q,n,d)= 
\sum _{{t=d~mod~q-1}\atop{{0<t \leq r}}}\Biggl (
\sum _{j=0}^{n+1}(-1)^j
\left( 
\begin{array}{l}
n+1 \\ 
~~j~~ \\ 
\end{array}
\right) \times \\
\left( 
\begin{array}{l}
t-jq+n \\ 
~~t-jq~~ \\ 
\end{array}
\right)
\Biggr ),
$$
\item minimum distance: $W_h^{(1)}(q,n,d)=(q-b)q^{n-a-1}$
where $a$ and $b$ are the quotient and the remainder
in the Euclidean division of $d-1$ by $q-1$, namely
$d-1=a (q-1)+b$ and $0 \leq b <q-1$.
\end{enumerate}
We denote by $N_h^{(1)}(q,n,d)$ the maximum number of zeros for a 
non-null homogeneous polynomial function of degree $d$ where $1 \leq d \leq n(q-1)$,
namely 
$$
N_h^{(1)}(q,n,d)=\frac{q^{n+1}-1}{q-1} - W_h^{(1)}(q,n,d)=\frac{q^{n+1}-1}{q-1}-(q-b) q^{n-a-1}.
$$

\section{Minimal distance and corresponding codewords}\label{minimalw}
\subsection{The affine case: GRM codes}
For the affine case recall that we write the degree $d$ in the following form:
\begin{equation}\label{mdec}
d=a(q-1)+b \quad \hbox{with } 0 \leq b < q-1.
\end{equation}
The minimum distance of a GRM code was given by T. Kasami, S. Lin, W. Peterson in \cite{kalipe}.
The words reaching this bound (i.e. the polynomials reaching the
maximal number of zeros) were characterized by P. Delsarte, J. Goethals
and F. MacWilliams in \cite{degoma}. As indicated in \cite{degoma} the polynomials
reaching this bound can be written:
\begin{equation}\label{pol}
P(X)=w_0 \prod_{i=1}^a \left(\strut 1 -(l_i(X)-w_i)^{q-1}\right) 
\prod_{j=1}^b \left(\strut l_{a+1}(X) -w'_j\right)
\end{equation} 
where $X\in {\mathbb F}_q^n$,
the $w'_j$ in the last $b$ factors are distinct elements of ${\mathbb{F}}_q$, the $w_i$
are arbitrary elements of ${\mathbb{F}}_q$ with $w_0 \neq 0$ and $l_i$ are $a+1$ linearly independent
linear forms on ${\mathbb{F}}_q^n$.

Give here the geometric interpretation of such a polynomial $f$ reaching the maximal number of zeros.
The hypersurface defined by $f$ is the following arrangement of hyperplanes:
\begin{enumerate} 
 \item $a$ blocks of $q-1$ parallel hyperplanes, each of them directed by one of the $a$
first linearly independent linear forms $l_i$,
 \item one block of $b$ parallel hyperplanes directed by $l_{a+1}$.
\end{enumerate}
Such a hypersurface will be called a maximal hypersurface and the associated polynomial
is called a maximal polynomial. The corresponding weight is the minimal weight.

\subsection{The projective case: PGRM codes}
Let us denote respectively by $W_h^{(1)}(q,n,d)$ and $W_h^{(2)}(q,n,d)$ the
first and second weight of the projective Reed-Muller code.

\begin{lemma}\label{lemm1}
Let $d > n(q-1)$. Then for any $N$ such that $0 \leq N \leq \frac{q^{n+1}-1}{q-1}$ there exists
a homogeneous polynomial of degree $d$ in $n+1$ variables having  $N$ zeros in ${\mathbb P}^n(q)$.
In particular $W_h^{(1)}(q,n,d)=1$ and $W_h^{(2)}(q,n,d)=2$.
\end{lemma}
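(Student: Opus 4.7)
The plan is to show directly that every integer $N$ with $0\leq N\leq (q^{n+1}-1)/(q-1)$ is realised as the number of projective zeros of some homogeneous polynomial of degree $d$; the equalities $W_h^{(1)}(q,n,d)=1$ and $W_h^{(2)}(q,n,d)=2$ then follow immediately by taking $N=|S|-1$ and $N=|S|-2$. The cleanest formulation is that the evaluation map ${\mathcal H}(q,n+1,d)\to {\mathbb F}_q^{S}$ is surjective as soon as $d>n(q-1)$: once this is established, for any $T\subseteq S$ with $|T|=|S|-N$ there exists $f\in {\mathcal H}(q,n+1,d)$ which is nonzero exactly on $T$, so the codeword attached to $f$ has exactly $N$ zero components.

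To prove surjectivity it is enough to produce, for each $s\in S$, a homogeneous polynomial $f_s\in {\mathcal H}(q,n+1,d)$ with $f_s(s)\neq 0$ and $f_s(t)=0$ for every $t\in S\setminus\{s\}$. The $|S|$ functions thus obtained form a basis of ${\mathbb F}_q^{S}$, so every prescribed value pattern on $S$ can be realised by a suitable linear combination.

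The construction of $f_s$ is the creative step. Writing $s=(0,\ldots,0,1,s_{i+1},\ldots,s_n)\in S_i$, set
$$
f_s(X)=X_i^{A}\cdot\prod_{j=0}^{i-1}\bigl(X_i^{q-1}-X_j^{q-1}\bigr)\cdot\prod_{k=i+1}^{n}\bigl(X_i^{q-1}-(X_k-s_kX_i)^{q-1}\bigr),
$$
where $A=d-n(q-1)\geq 1$. Every factor is homogeneous and the total degree is $A+i(q-1)+(n-i)(q-1)=d$, so $f_s\in {\mathcal H}(q,n+1,d)$. The three groups of factors detect the chart containing $s$: the factor $X_i^A$ vanishes on every $S_j$ with $j>i$ because such a point has $X_i=0$; for $j<i$ and $t\in S_j$, either $t_i=0$ (killed by $X_i^A$) or $t_i\in{\mathbb F}_q^{\ast}$ together with $t_j=1$, which forces $X_i^{q-1}-X_j^{q-1}=1-1=0$; and for $t\in S_i\setminus\{s\}$ one has $X_i=1$ and some index $k>i$ with $t_k\neq s_k$, so $X_i^{q-1}-(X_k-s_kX_i)^{q-1}=1-1=0$. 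At $s$ itself every factor evaluates to $1$, so $f_s(s)=1$.

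The main obstacle is precisely finding such an $f_s$: the expression must be simultaneously homogeneous of the prescribed degree $d$ and must isolate $s$ from every other chart representative. The key trick is to replace the affine indicator $1-(X_k-s_k)^{q-1}$ by its homogenized form $X_i^{q-1}-(X_k-s_kX_i)^{q-1}$, using $X_i$ as the homogenizing variable, which works because $X_i$ evaluates to $1$ at $s$ and to $0$ on each $S_j$ with $j>i$. Once the $f_s$ are in place the conclusion is routine: for any $N$ in the stated range, pick $T\subseteq S$ with $|T|=|S|-N$, set $f=\sum_{s\in T}f_s$, and observe that $f$ is homogeneous of degree $d$ and takes the value $1$ on $T$ and $0$ on $S\setminus T$. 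The cases $N=|S|-1$ and $N=|S|-2$ produce codewords of weights $1$ and $2$, so $W_h^{(1)}(q,n,d)=1$ and $W_h^{(2)}(q,n,d)=2$.
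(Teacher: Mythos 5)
Your proof is correct and follows essentially the same route as the paper: you build exactly the homogeneous indicator functions $f_s$ (the ones the paper takes from S\o rensen's work) and sum them over the complement of the prescribed zero set to realise every $N$ in the stated range. The only difference is that you verify the indicator property explicitly, chart by chart, where the paper simply cites it.
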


\begin{proof}
let 
$$\omega=(0:0: \cdots :1:\omega_{j+1}:\cdots:\omega_n)$$
be a point in ${\mathbb P}^n(q)$ and
$$
f_{\omega}^d(X)= \\X_j^{d-n(q-1)}\prod_{i=0}^{j-1}\left(X_j^{q-1}-X_i^{q-1}\right)\times 
\prod_{i=j+1}^{n}\left(X_j^{q-1} -\left(X_i - \omega_i X_j \right)^{q-1}\right)
$$
be the indicator-function for $\omega$ (cf. \cite{sore2}). The $\frac{q^{n+1}-1}{q-1}$
polynomial functions $f_{\omega}^d(X)$ are a basis for the space of homogeneous polynomials
of degree $d$. 
Let  $U=\{u_1,u_2,\cdots,u_N\}$ be a set consisting of
$N$ distinct points of ${\mathbb P}^n(q)$.
The function
$$f(X)=\sum_{\omega \notin U}f_{\omega}^d(X)$$
has exactly $N$ zeros, namely the points of $U$.
 \end{proof}

\begin{lemma}
For $n=1$ and $d \leq q-1$ the first and the second weight of the projective Reed-Muller code
are respectively
\begin{equation}\label{wh1}
W_h^{(1)}(q,1,d)=q-d+1.
\end{equation}
\begin{equation}\label{wh2}
W_h^{(2)}(q,1,d)=q-d+2.
\end{equation}
\end{lemma}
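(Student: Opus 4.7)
The strategy is to use the fundamental identity $W_h(f) = (q+1) - N_h(f)$ and to reduce the problem to determining the two largest attainable values of $N_h(f)$ for a nonzero $f \in \mathcal{H}(q,2,d)$. Since $n=1$, each such $f$ is a binary form and factors over the algebraic closure $\overline{\mathbb{F}_q}$ into $d$ linear factors (with multiplicity), so the set of its distinct zeros in $\mathbb{P}^1(\overline{\mathbb{F}_q})$, and \emph{a fortiori} in $\mathbb{P}^1(\mathbb{F}_q)$, has at most $d$ elements. This immediately yields the lower bounds $W_h^{(1)}(q,1,d) \geq q-d+1$ and, because $N_h$ is an integer, $W_h^{(2)}(q,1,d) \geq q-d+2$.

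To match these lower bounds I would exhibit two explicit homogeneous polynomials. Since $d \leq q-1$, we may pick $d$ distinct elements $a_1, \ldots, a_d \in \mathbb{F}_q$ and take
\[
f(X_0, X_1) = \prod_{i=1}^{d}(X_0 - a_i X_1),
\]
whose zero set in $\mathbb{P}^1(q)$ consists of the $d$ distinct points $[a_i:1]$, giving weight $q-d+1$ exactly and thus confirming \eqref{wh1}. For \eqref{wh2}, choose $a_1, \ldots, a_{d-1}$ distinct in $\mathbb{F}_q$ and set
\[
g(X_0, X_1) = (X_0 - a_1 X_1)^2 \prod_{i=2}^{d-1}(X_0 - a_i X_1),
\]
which is a binary form of degree $d$ with exactly $d-1$ distinct rational zeros, hence weight $q-d+2$.

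The only non-obvious point is that no value of $N_h$ may lie strictly between $d-1$ and $d$, but this is automatic since $N_h$ is integral, so I expect no serious obstacle. A minor caveat concerns $d=1$, where $N_h = d-1 = 0$ is realised only by the zero form; there the stated equality $W_h^{(2)}=q+1$ rests on the convention, already built into the definition of $\mathrm{PGRM}_q(n,d)$ in Section~\ref{overview}, that the zero word is a codeword.
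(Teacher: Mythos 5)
Your main argument is correct and is essentially the paper's own: both proofs reduce to the fact that a nonzero binary form of degree $d\leq q-1$ has at most $d$ zeros in ${\mathbb P}^1(q)$ (you obtain this from the factorization over $\overline{\mathbb F}_q$, the paper by writing $f=X_0g+\lambda X_1^d$ and counting roots of the dehomogenization, with a case split at the point at infinity), followed by exhibiting codewords with exactly $d$ and exactly $d-1$ zeros; your split products of pairwise non-proportional linear forms are simply a more explicit version of the paper's ``choose $f$ so that $f(1,X_1)$ has $d$ (resp. $d-1$) roots''.

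The one point to correct is your closing remark about $d=1$. The convention that the zero word belongs to ${\rm PGRM}_q(n,d)$ does not salvage the equality $W_h^{(2)}(q,1,1)=q+1$: the zero word has weight $0$, and a codeword of weight $q+1$ would require a nonzero linear form with no zero in ${\mathbb P}^1(q)$, which does not exist. For $d=1$ every nonzero codeword has weight exactly $q$, so the code has only the weights $0$ and $q$ and there is no second weight at all; the second formula of the lemma must be read with $d\geq 2$, which is precisely the restriction the paper's proof imposes when it starts with ``$2\leq d\leq q-1$''. On that range your argument is complete and correct.
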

\begin{proof}
Let $f$ be a homogeneous polynomial in $2$ variables of degree $d$ where  $2 \leq d \leq q-1$.
We can write
$$f(X_0,X_1)=X_0 g(X_0,X_1)+ \lambda X_1^d.$$
where $g$ is homogeneous of degree $d-1$ and $\lambda \in {\mathbb F}_q$.
Let us choose $f$ such that $\lambda \neq 0$.
If $X_0=0$ then $X_1 =1$. Hence $f$  has no zero for $X_0=0$.
If $X_0=1$ then $f(1,X_1)=g(1,X_1)+\lambda X_1^d$. Hence $f(1,X_1)$ is a polynomial
in one variable of degree $d$. Then it is possible to find $f$ such that
$f(1,X_1)$ has $d$ zeros in ${\mathbb F}_q$. In this case $f(X_0,X_1)$ has $d$ zeros in ${\mathbb P}^1(q)$.

Now let us choose $f$ such that $\lambda=0$. In this case $(0:1)$ is a solution
and for $X_0=1$ we have $f(1,X_1)=g(1,X_1)$. Hence we can choose $f$ such that
$f(1,X_1)=g(1,X_1)$ has $d-1$ zeros in ${\mathbb F}_q$. In this case $f(X_0,X_1)$ has also $d$ zeros.
We conclude that $W_h^{(1)}(q,1,d)=(q+1)-d$.

Remark that as $W_h^{(2)}(q,1,d)> W_h^{(1)}(q,1,d)=q-d+1$ we have 
$W_h^{(2)}(q,1,d)\geq q-d+2$. It is straightforward, using for example 
$$f(X_0,X_1)=X_0 g(X_0,X_1)+ X_1^d$$
where $f(1,X_1)$ has $d-1$ zeros in ${\mathbb F}_q$, to build a function $f(X_0,X_1)$ having 
$d-1$ zeros. We conclude that $W_h^{(2)}(q,1,d)= q-d+2$.
 \end{proof}

In order to describe the minimal distance for the projective case, 
write $d-1= a(q-1)+b$ with $0 \leq b < q-1$.
The minimum distance of a PGRM code was given by J.-P. Serre for $d \leq q$ (cf. \cite{serr}),
and by A. S\o rensen in \cite{sore2} for the general case. 
The polynomials reaching the maximal number of zeros (or defining the minimum weighted codewords)
are given by J.-P. Serre for $d \leq q$ (cf. \cite{serr}) and by the last author (cf. \cite{roll1})
for the general case. Let us give a detailed proof of the following result stated in \cite{roll1}.

\begin{theorem}\label{proj1}
Let $f$ be a homogeneous polynomial in $n+1$ variables
of total degree $d$, with coefficients in ${\mathbb F}_q$,  
which does not vanish on the whole projective space ${\mathbb P} ^{n}(q)$.
Then the following holds: 
\begin{enumerate}
\item \label{pt1}
The number of ${\mathbb F}_q$-rational points $N_h(f)$ of the projective
algebraic set defined by $f$ satisfies the following:
\begin{equation}\label{eq1}
N_h(f) \leq \frac{q^{n+1}-1}{q-1}-W_h^{(1)}(q,n,d)
\end{equation}
where
$$W_h^{(1)}(q,n,d)=
\left \{
\begin{array}{ll}
1  & \hbox{ if } d > n(q-1), \\
(q-b)q^{n-a-1} & \hbox{ if } d \leq n(q-1), 
\end{array}
\right .
$$
with
\[ d-1=a(q-1)+b \hbox{ and } \quad 0 \leq b <q-1.\]
\item \label{pt2} The bound in (\ref{eq1}) is attained.
When $d \leq n(q-1)$, the polynomials $f$ attaining this bound are exactly 
the polynomials defining a hypersurface $V=Z_h(f)$ such that:
$V$ contains a hyperplane $H$ (namely $f$ vanishes on $H$) and $V$
restricted to the affine space ${\mathbb A}^n(q)={\mathbb P}^n(q) \setminus H$ 
is a maximal affine hypersurface of ${\mathbb A}^n(q)$.
\end{enumerate}
\end{theorem}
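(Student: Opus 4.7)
Part~\ref{pt1} is Sørensen's bound together with Lemma~\ref{lemm1} (for $d > n(q-1)$), so the substance of the theorem lies in part~\ref{pt2}. I would prove part~\ref{pt2} by induction on $n$, with base case $n=1$ given by the preceding lemma. For existence, take any hyperplane $H\subset{\mathbb P}^n(q)$ cut out by a linear form $L$, and any maximal affine polynomial $g$ of degree $d-1$ furnished by Theorem~\ref{Del}; then $f=L\cdot g^\sharp$, where $g^\sharp$ denotes the degree-$(d-1)$ homogenization of $g$, is a homogeneous polynomial of degree $d$ with $N_h(f)=\tfrac{q^n-1}{q-1}+N_a(g)=\tfrac{q^n-1}{q-1}+q^n-(q-b)q^{n-a-1}$, meeting the bound.

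For the converse, suppose $f$ attains the bound and consider the incidence count between $Z_h(f)$ and all hyperplanes of ${\mathbb P}^n(q)$. Since each projective point lies in $\tfrac{q^n-1}{q-1}$ hyperplanes,
\[
\sum_{H}\lvert Z_h(f)\cap H\rvert \;=\; N_h(f)\cdot\frac{q^n-1}{q-1}.
\]
If $f|_H$ were a nonzero function for every hyperplane $H$, the inductive hypothesis (or Lemma~\ref{lemm1} when $d>(n-1)(q-1)$) would yield $\lvert Z_h(f)\cap H\rvert \leq N_h^{(1)}(q,n-1,d)$ for every $H$, and averaging over the $\tfrac{q^{n+1}-1}{q-1}$ hyperplanes would force
\[
N_h(f)\;\leq\; N_h^{(1)}(q,n-1,d)\cdot\frac{q^{n+1}-1}{q^n-1},
\]
an elementary calculation (ultimately relying on $\tfrac{q^{n+1}-1}{q^n-1}>q$) showing that the right-hand side is strictly smaller than the extremal value $N_h^{(1)}(q,n,d)$. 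Hence some hyperplane $H$ satisfies $f|_H=0$ as a function, so $Z_h(f)\supset H$. On the affine complement ${\mathbb A}^n(q)={\mathbb P}^n(q)\setminus H$, the remaining zeros count as $N_a(\tilde f)=N_h(f)-\lvert H\rvert = q^n-(q-b)q^{n-a-1} = N_a^{(1)}(q,n,d-1)$, so $\tilde f$ attains the affine maximum for degree $d-1$, and Theorem~\ref{Del} provides the claimed arrangement of parallel hyperplanes.

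The main obstacle is the interface between the degree $d$ of $f$ and the degree $d-1$ used in the affine characterization. When $d<q+1$ the homogeneous vanishing ideal ${\mathcal J}$ has no elements of degree $d$, so $f|_H=0$ as a function forces the linear form defining $H$ to divide $f$ as a polynomial, producing a clean factorization $f=L\cdot g$ with $\deg g=d-1$. For $d\geq q+1$ such a polynomial factorization need not exist, and the argument must be phrased in terms of zero sets (or via a reduced representative modulo ${\mathcal J}$); the geometric conclusion survives, but the bookkeeping takes additional care.
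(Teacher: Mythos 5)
Your proposal is correct and takes essentially the same approach as the paper: the same incidence count over all hyperplanes (phrased with $N_h$ instead of $W_h$, and using Lemma \ref{lemm1} when $d>(n-1)(q-1)$) forces an extremal $Z_h(f)$ to contain a hyperplane, followed by the same reduction to the affine degree $d-1$ maximum and the Delsarte--Goethals--MacWilliams characterization. The degree-drop subtlety you flag for $d\geq q+1$ is genuine but easily closed---writing $f=X_0g+h(X_1,\ldots,X_n)$, the $X_0$-free part $h$ vanishes identically on ${\mathbb F}_q^{n}$, so $f$ and $X_0g$ define the same codeword and one may take $f_1=g$ of degree $d-1$---a point the paper's own proof silently assumes.
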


\begin{proof}
The point (\ref{pt1}) is proved by S\o rensen
in \cite{sore2}.
However, in order to prove at the same time 
the point (\ref{pt2}),
let us rewrite entirely the proof given by S\o rensen of the point (\ref{pt1})
and let us show that one can deduce the result (\ref{pt2}) from this proof.

If $d > n(q-1)$, as $f$ does not vanish on  the whole projective space ${\mathbb P}^{n}(q)$,
then $N_h(f) \leq \frac{q^{n+1}-1}{q-1}-1$.
Lemma \ref{lemm1} proves that this bound is attained. 

\medskip

If $d \leq n(q-1)$ and $V=Z_h(f)$ 
contains a hyperplane $H$,
we can suppose that this hyperplane is given by $X_0=0$,
so that $f=X_0f_1$, where $f_1$ is a homogeneous polynomial
of degree $d-1$. 
The complement of $H$ is the affine space
\[ {\mathbb A}^n(q)=\{ X \in {\mathbb P}^n(q)~|~ X_0=1\}.\]
Let ${\widetilde{f_1}}$ be the polynomial in $n$ variables
obtained from $f_1$ by setting $X_0=1$. This polynomial
is defined on ${\mathbb A}^n(q)$ and does not vanish on the whole affine space ${\mathbb A}^n(q)$. Hence,
using the result of Kasami and al. (\cite{kalipe}), we obtain:
$$N_a({\widetilde{f_1}}) \leq q^n-(q-b)q^{n-a-1},$$
and consequently
$$N_h(f)=\# H +  N_a({\widetilde{f_1}})
\leq \frac{q^n-1}{q-1} + q^n - (q-b)q^{n-a-1},$$
$$N_h(f) \leq \frac{q^{n+1}-1}{q-1} - (q-b)q^{n-a-1},$$
where the symbol $\#$ denotes the cardinal.
The bound is attained if and only if the polynomial $\widetilde{f_1}$
verifies the conditions of maximality given in \cite{degoma}. 

If $d \leq n(q-1)$ and $V=Z_h(f)$ does not contain any hyperplane, we give 
a proof of (\ref{eq1}) by induction on $n$.
If $n=1$ and $d>q-1$ we know by Lemma \ref{lemm1} that the result is true. If $d \leq q-1$ the
homogeneous polynomial $f$ in two variables of degree $d$ can be written:
$$f(X_0,X_1)=aX_1^d + bX_0g(X_0,X_1)$$
where $a\neq 0$ and $b\neq 0$ because $V$ does not contain any hyperplane and where 
$g$ is a non null homogeneous polynomial function of degree $d-1$.
The point at infinity $X_0=0, X_1=1$ of the projective line is not a zero, 
then the only zeros are points 
such that $X_0=1$ and $X_1$ is solution of a polynomial equation in one variable of degree $d$.
Then $N_h(f) \leq d$ and the induction property is verified.

Next suppose that the property is true for $n-1$ and $Z_h(f)$ does not contain
any hyperplane. Then for any hyperplane $H$ we have 
$$\#(Z_h(f) \cap H) \leq \frac{q^n-1}{q-1}-W_h^{(1)}(q,n-1,d),$$
$$\#(H\setminus Z_h(f) \cap H) \geq W_h^{(1)}(q,n-1,d).$$
Let us count the number ${\mathcal N}$ of couple $(M,H)$ where $H$ is a hyperplane
and $M$ a point in $\left({\mathbb P}^n(q)\setminus Z_h(f)\right)\cap H$.
We know that the number of hyperplanes containing  a given point is $\frac{q^n-1}{q-1}$.
Then 
$${\mathcal N}=\frac{q^n-1}{q-1} \#\left({\mathbb P}^n(q)\setminus Z_h(f)\right).$$
This number is also the following sum on the $\frac{q^{n+1}-1}{q-1}$ hyperplanes of 
the space ${\mathbb P}^n(q)$
$${\mathcal N}= \sum_H \# (H\setminus Z_h(f)\cap H)\geq \frac{q^{n+1}-1}{q-1}W_h^{(1)}(q,n-1,d).$$
Then 
$$W_h(f) \geq  \frac{q^{n+1}-1}{q^n-1}W_h^{(1)}(q,n-1,d),$$
$$W_h(f) > q W_h^{(1)}(q,n-1,d).$$
As $d \leq n(q-1)$ we have two cases:
\begin{enumerate}
 \item $d \leq (n-1) (q-1)$ and then $W_h^{(1)}(q,n-1,d)= (q-b)q^{n-a-2}$.
Hence $qW_h^{(1)}(q,n-1,d)=(q-b)q^{n-a-1}=W_h^{(1)}(q,n,d)$. In this case we conclude
$$W_h(f) > W_h^{(1)}(q,n,d),$$
which proves that the the induction property is verified and also that
the bound cannot be reached by a hypersurface which does not contain any hyperplane.
 \item $(n-1) (q-1) < d \leq n(q-1)$ and in this case we have $W_h^{(1)}(q,n-1,d)=1$, $a=n-1$ and
$W_h^{(1)}(q,n,d)=q-b$. Then 
 $$W_h(f) >q W_h^{(1)}(q,n-1,d)=q \geq q-b,$$
$$W_h(f) > W_h^{(1)}(q,n,d),$$
which proves that the the induction property is verified and also that
the bound cannot be reached by a hypersurface which does not contain any hyperplane.

The point (\ref{pt2})  is a consequence of the above reasoning. 
\end{enumerate}
 \end{proof}

\section{Low weight codewords in the affine case}
\subsection{The second weight in the affine case}\label{secondw}
Let us denote by $W_a^{(2)}(q,n,d)$ the second weight of the GRM code $RM_q(d,n)$, 
namely the weight which is just
above the minimum distance. Several simple cases can be easily described. 
If $d=1$, we know that the code has only three weights: $0$, the minimum distance 
$W_a^{(1)}(q,n,1)=q^n-q^{n-1}$ and the second weight $W_a^{(2)}(q,n,1)=q^n$.
For $d=2$ and $q=2$ the weight distribution is  more or less a
consequence of the investigation of quadratic forms done by L. Dickson
in \cite{dick} and was also done by E. Berlekamp and N. Sloane in an 
unpublished paper. For $d=2$ and any $q$ (including $q=2$) 
the weight distribution was given
by R. McEliece
in \cite{mcel}. For $q=2$, for any $n$ and any $d$, the weight distribution is 
known in the range $[W_a^{(1)}(2,n,d),2.5W_a^{(1)}(2,n,d)]$ by a result of Kasami, Tokura, Azumi \cite{katoaz}. 
In particular, the second weight is $W_a^{(2)}(2,n,d)=3\times  2^{n-d-1}$ if $1<d<n-1$ and 
$W_a^{(2)}(2,n,d)=2^{n-d+1}$ if $d=n-1$ or $d=1$.
For $d \geq n (q-1)$ the code ${\rm RM}_q(d,n)$ is trivial, namely it is the whole 
${\mathcal F}(q,d,n)$,
hence any integer $0\leq t \leq q^n$ is a weight. 

The general problem of the second weight was tackled by D. Erickson
in his thesis \cite[1974]{eric} and was partly solved.
Unfortunately this very good piece of work was not published and remained virtually unknown.
Meanwhile several authors became interested in the problem. 
The second weight was first studied by J.-P. Cherdieu and R. Rolland in \cite{chro1}
who proved that when $q>2$ is fixed, for $d<q$ sufficiently small
the second weight is
$$W_a^{(2)}(q,n,d)=q^n - d q^{n-1} + (d-1) q^{n-2}.$$
Their result was improved by A. Sboui 
in \cite{sbou}, who
proved the formula for $d \leq q/2$. The methods in \cite{chro1}
and \cite{sbou} are of a geometric nature by means of which the
codewords reaching this weight were determined. These codewords
are hyperplane arrangements.
Then O. Geil in \cite{geil1}, using Gr\"obner basis
methods, proved the formula for $d <q$. 
Moreover as an application of his method, he gave a new
proof of the Kasami-Lin-Peterson minimum distance formula and determined, when
 $d>(n-1) (q-1)$,
the first $d+1 -(n-1) (q-1)$ weights. In particular for $n=2$ the problem is completely solved,
and this case is particularly important as we shall see later.
Finally, the last author in \cite{roll2}, using a mix of Geil's method and geometrical
considerations found the second weight for all cases except when $d=a(q-1)+1$.
However the Gr\"obner basis method does not determine all the codewords reaching the second weight.

Recently, A. Bruen (\cite{brue3}) exhumed the work of Erickson and completed the proof,
solving the problem of the second weight for Generalized Reed-Muller code.
Describe a little more the result of Erickson. First, in order to present his result
introduce the following notation
used in \cite{eric}:
$s$ and $t$ are integers such that
$$d=s(q-1)+t, \hbox{ with } 0<t \leq q-1.$$

\begin{theorem}\label{w2}
The second weight $W_a^{(2)}(q,n,d)$ is
$$
W_a^{(2)}(q,n,d)=W_a^{(1)}(q,n,d)+c q^{n-s-2}
$$
where $W_a^{(1)}(q,n,d)=(q-t)q^{n-s-1}$ is the minimal distance and $c$ is
$$
c= \left \{
 \begin{array}{lll}
   q   &  \hbox{ if } & s=n-1\\
   t-1 &  \hbox{ if } & s<n-1 \hbox{ and } 1<t\leq \frac{q+1}{2}\\
       &  \hbox{ or } & s<n-1 \hbox{ and } t=q-1 \neq 1\\
   q   &  \hbox{ if } & s=0 \hbox{ and } t=1\\
   q-1 &  \hbox{ if } & q<4, s<n-2 \hbox{ and } t=1\\
   q-1 &  \hbox{ if } & q=3, s=n-2 \hbox{ and } t=1\\
   q   &  \hbox{ if } & q=2, s=n-2 \hbox{ and } t=1\\
   q   &  \hbox{ if } & q\geq 4, 0<s \leq n-2 \hbox{ and } t=1\\
   c_t &  \hbox{ if } & q\geq 4, s \leq n-2 \hbox{ and }  \frac{q+1}{2}<t
 \end{array}
\right .
$$
The number $c_t$ is such that $c_t+(q-t)q$ is the second weight for the code ${\rm RM}_q(2,t)$.
\end{theorem}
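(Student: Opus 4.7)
The plan is to establish matching upper and lower bounds for $W_a^{(2)}(q,n,d)$, handling the various cases in the table by separate analyses. For the upper bound I would exhibit, in each case, an explicit codeword whose weight equals the claimed value $W_a^{(1)}(q,n,d) + c\, q^{n-s-2}$. The natural starting point is the description of minimum-weight codewords from Theorem \ref{Del}: a maximal hypersurface is an arrangement of $s$ (or $s+1$, when $t=q-1$) blocks of $q-1$ parallel hyperplanes, possibly together with a block of $t$ parallel hyperplanes, whose directing linear forms are linearly independent. A candidate for the second weight is then obtained by perturbing this arrangement minimally: replacing one hyperplane of the last block by a non-parallel one, swapping two parallel hyperplanes for two intersecting ones in a tilted direction, or (when $t>(q+1)/2$) substituting an absolutely irreducible plane curve of degree $t$ for the $t$-block. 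In each regime one counts the zeros lost and gained by the modification, and the cheapest perturbation yields the corresponding value of $c$.

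For the lower bound I would proceed by induction on $n$. Suppose $f\in {\mathcal RP}(q,n,d)$ is non-zero with weight strictly between $W_a^{(1)}(q,n,d)$ and the conjectured second weight. Slicing ${\mathbb F}_q^n$ by the $q$ parallel affine hyperplanes $H_\lambda=\{l=\lambda\}$ attached to a well-chosen linear form $l$, one has $W_a(f)=\sum_\lambda W_a(f|_{H_\lambda})$, where each restriction lies in a smaller GRM code. The induction hypothesis, combined with the classification of minimum-weight codewords from Theorem \ref{Del}, forces all but at most one of the restrictions to be minimum-weight codewords of the same parallel type, which in turn forces $f$ itself to be a maximal polynomial, contradicting the assumption on its weight. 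The base cases are $s=n-1$ (where the cheapest perturbation simply shifts a single hyperplane, accounting for $c=q$ directly) and $n=2$, the latter being handled by Geil's Gr\"obner basis analysis in \cite{geil1}, which both proves the formula for $d<q$ and produces the constants $c_t$ that appear in the range $t>(q+1)/2$.

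The main obstacle is that no uniform construction achieves the second weight across all regimes: the optimal perturbation of a maximal arrangement depends sensitively on the residue $t$, on whether $s=n-1$ or $s<n-1$, and on the size of $q$, with the small fields $q=2,3$ behaving differently from $q\geq 4$ because the number of available directions in the hyperplane arrangement is too restricted in low characteristic to permit the generic modification. The most delicate point is establishing the matching lower bound when $t>(q+1)/2$, where the extremal codeword is no longer a hyperplane arrangement but contains a genuinely irreducible plane factor of degree $t$; ruling out every hyperplane arrangement of lower weight against this competitor is precisely where Erickson's original argument in \cite{eric} required completion by Bruen in \cite{brue3} before the theorem could be stated in the present generality.
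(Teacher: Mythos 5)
There is no in-paper proof to compare against: the paper states Theorem \ref{w2} as a known result, quoted from Erickson's thesis \cite{eric}, with the single remaining constant settled by Bruen \cite{brue3} (equivalently by Geil \cite{geil1} or Rolland \cite{roll2}). Your proposal is therefore to be judged as a standalone argument, and as such it is a roadmap rather than a proof. The upper-bound half (perturbing a Delsarte--Goethals--MacWilliams maximal arrangement from Theorem \ref{Del} and counting zeros lost) is plausible in every regime, but the lower bound is where the entire difficulty sits, and your inductive step does not close numerically. If $f$ has weight below the claimed second weight and you slice by the $q$ hyperplanes $l=\lambda$, the identity $W_a(f)=\sum_\lambda W_a(f|_{H_\lambda})$ together with $W_a^{(1)}(q,n,d)=q\,W_a^{(1)}(q,n-1,d)$ gives, when two slices are non-minimal, only the excess $2\bigl(W_a^{(2)}(q,n-1,d)-W_a^{(1)}(q,n-1,d)\bigr)$, which is of order $q^{n-s-3}$; the claimed gap $c\,q^{n-s-2}$ is larger by a factor of roughly $q/2$, so ``all but at most one restriction is minimal'' does not follow from the induction hypothesis. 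You would in addition have to produce (not assume) a good direction $l$, handle slices on which $f$ vanishes identically or on which $d$ exceeds $(n-1)(q-1)$, and then prove the rigidity statement that $q-1$ or $q$ minimal slices of the same parallel type force $f$ to be maximal; these points are exactly the content of Erickson's long combinatorial analysis, which you cite rather than reprove.

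There is also a misplacement of where the known difficulty lies. As stated, the theorem does not assert $c_t=t-1$; in the range $q\geq 4$, $s\leq n-2$, $t>\frac{q+1}{2}$ it only reduces the constant to the second weight of the two-variable code ${\rm RM}_q(2,t)$, i.e.\ the hard step for \emph{this} statement is the reduction to $n=2$, which is Erickson's work. The determination of $c_t$ (via blocking sets and Bruen \cite{brue1}, \cite{brue2}, \cite{brue3}, or via Geil \cite{geil1}) is the subsequent theorem. Your sketch conflates the two, and it also leans on Geil's Gr\"obner-basis result as a base case while simultaneously invoking Erickson and Bruen for the same regime; a self-contained proof would have to choose one route and carry out the case analysis producing each value of $c$ in the table, none of which is done here.
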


It results from the previous theorem that 
if one can compute the second weight for a case where $c=c_t$, the problem 
is completely solved. Alternatively, Erickson conjectured that $c_t=t-1$ and 
reduced this conjecture to a conjecture on blocking sets \cite[Conjecture 4.14 p. 76]{eric}. 
Recently in \cite{brue3}
A. Bruen proved that this conjecture follows from two of his papers \cite{brue1}, \cite{brue2}. 
Then the problem is now solved by \cite{eric}+\cite{brue3}. It is also solved by \cite{eric}+\cite{geil1}
(the important case $n=2$ is completely solved in \cite{geil1} 
and this leads to the conclusion as noted above)
or by \cite{eric}+\cite{roll2} (the cases not solved in \cite{eric} are explicitly resolved in \cite{roll2}).
More precisely

\begin{theorem}
The coefficient $c_t$ used in the previous theorem \ref{w2} is
$$c_t=t-1.$$
\end{theorem}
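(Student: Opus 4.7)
The proof plan is to invoke the existing results recounted just before the statement. By the definition of $c_t$ in Theorem~\ref{w2}, we have $W_a^{(2)}(q,2,t) = (q-t)q + c_t$, so it suffices to compute this second weight directly.

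My first approach is to apply Geil's result \cite{geil1}. The hypothesis $(q+1)/2 < t \leq q-1$ in the relevant row of Theorem~\ref{w2} places us in the range $d = t < q$, which is precisely the range where Geil's Gr\"obner basis / footprint computation yields the general formula
$$W_a^{(2)}(q,n,d) = q^n - dq^{n-1} + (d-1) q^{n-2}.$$
Specializing to $n=2$ and $d=t$ gives $W_a^{(2)}(q,2,t) = q^2 - tq + (t-1)$. Comparing with $(q-t)q + c_t$ immediately yields $c_t = t-1$. A parallel route would go through Erickson's reduction \cite{eric} of the conjecture $c_t = t-1$ to a statement about blocking sets, subsequently established by Bruen \cite{brue1,brue2,brue3}.

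If I wished to give a self-contained proof, the easy half (upper bound on the second weight, i.e.\ $c_t \leq t-1$) would be the following explicit construction: for distinct $a_1,\dots,a_{t-1} \in \mathbb{F}_q$ and any $b \in \mathbb{F}_q$, the polynomial
$$P(X,Y) = (Y - b)\prod_{i=1}^{t-1}(X - a_i)$$
vanishes on $t-1$ parallel lines together with a transversal, for a total of $(t-1)q + (q-(t-1)) = tq - t + 1$ zeros, and so defines a codeword of weight $q^2 - tq + (t-1)$. Since this configuration is not of the Delsarte--Goethals--MacWilliams maximal form (which for $n=2$, $d=t \leq q-1$ requires $t$ parallel lines), its weight exceeds the minimum distance and hence bounds the second weight from above.

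The hard part is the matching lower bound $c_t \geq t-1$: one must show that every degree-$t$ polynomial in two variables whose vanishing locus is not a union of $t$ parallel lines has at most $tq - t + 1$ zeros. This is where the real content of Geil's footprint estimate (or, equivalently, Bruen's blocking-set bound) enters. For $t > (q+1)/2$ the na\"ive case-by-case geometric arguments of \cite{chro1} and \cite{sbou} (which handle $t \leq q/2$) no longer suffice, since intersections between the candidate curves and the competing hyperplane configurations overlap in subtler ways; reconstructing this step from scratch, rather than citing \cite{geil1} or \cite{brue3}, would be the genuine obstacle.
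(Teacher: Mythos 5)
Your proposal is correct and follows essentially the same route as the paper, which gives no self-contained argument but deduces the theorem from the cited literature: your specialization of Geil's formula $W_a^{(2)}(q,n,d)=q^n-dq^{n-1}+(d-1)q^{n-2}$ (valid for $d<q$) to $n=2$, $d=t$ is exactly the combination \cite{eric}+\cite{geil1} invoked just before the statement, and your fallback through Erickson's blocking-set conjecture resolved by Bruen is the combination \cite{eric}+\cite{brue3}. The explicit configuration giving the upper bound $c_t\leq t-1$ is a harmless addition, and your assessment that the matching lower bound is precisely the content one must import from \cite{geil1} or \cite{brue3} agrees with the paper.
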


\begin{remark}\label{secondab}
The values $s$ and $t$ are connected to the values $a$ and $b$ of the formula (\ref{mdec}) in the following way:
$a=s$ and $b=t$ unless $t=q-1$ and in this case $a=s+1$ and $b=0$.
Let us also express the second weight with the 
classical writing (\ref{mdec}) for the Euclidean quotient (cf. \cite{roll2}). Let us define $\gamma$
to be such that  
$$W_a^{(2)}(q,n,d)=W_a^{(1)}(q,n,d)+ \gamma\, q^{n-a-2}.$$
The second weight is given by the following:

\par\noindent \phantom{I}I) ${n=1}$ (and then $q>2$): $$W_a^{(2)}(q,n,d)= q-d+1; \quad \gamma=q;$$
\par\noindent II) ${n \geq 2}$
\par\noindent \quad A) ${d=1}$: $$W_a^{(2)}(q,n,d)=q^n;\quad \gamma=q;$$
\par\noindent \quad B) ${d \geq 2}$
\par\noindent \quad\quad 1) ${q=2}$
\par\noindent \quad\quad\quad a) ${2\leq d<n-1}$: $$W_a^{(2)}(q,n,d)=3\times 2^{n-d-1};\quad \gamma=q=2;$$
\par\noindent \quad\quad\quad b) ${d=n-1}$: $$W_a^{(2)}(q,n,d)=4;\quad \gamma=q^2=4;$$
\par\noindent \quad\quad 2) ${q \geq 3}$
\par\noindent \quad\quad\quad a) ${2 \leq d < q-1}$: $$W_a^{(2)}(q,n,d)= q^n-dq^{n-1}+(d-1)q^{n-2};\quad \gamma=b-1=d-1;$$
\par\noindent \quad\quad\quad b) ${(n-1)(q-1) <d < n(q-1)}$: $$W_a^{(2)}(q,n,d)= q-b+1;\quad \gamma=q;$$
\par\noindent \quad\quad\quad c) ${q-1 \leq d \leq (n-1)(q-1)}$
\par\noindent \quad\quad\quad\quad \phantom{ii}i) ${b=0}$: $$W_a^{(2)}(q,n,d)= 2q^{n-a-1}(q-1);\quad \gamma=q(q-2);$$
\par\noindent \quad\quad\quad\quad \phantom{i}ii) ${b=1}$
\par\noindent \quad\quad\quad\quad\quad $\alpha$) ${q=3}$ $$W_a^{(2)}(3,n,d)=8\times 3^{n-a-2};\quad \gamma=q-1;$$
\par\noindent \quad\quad\quad\quad\quad $\beta$) ${q\geq 4}$: $$W_a^{(2)}(q,n,d)=q^{n-a};\quad \gamma=q;$$
\par\noindent \quad\quad\quad\quad iii) ${2 \leq b <q-1}$: $$W_a^{(2)}(q,n,d)=q^{n-a-2}(q-1)(q-b+1); \quad \gamma=b-1.$$
\end{remark}

Finally let us remark that we now have several approaches, close to each other,
but nevertheless different. The first one 
\cite{eric},\cite{brue3} is mainly based on combinatorics of finite geometries, 
the second one \cite{chro1},\cite{sbou}, \cite{roll2}
is mainly based on geometry and hyperplane arrangements, the third \cite{geil1}, \cite{roll2} is 
mainly based on
polynomial study by means of commutative algebra and Gr\"obner basis.  
All these approaches can be fruitful for the study of similar problems,
in particular for the similar codes 
based on incidence structures, finite geometry and incidence matrices
(see \cite{vdv}, \cite{lastvv1}, \cite{lastvv2}, \cite{lastszvv}).

The polynomials reaching the second weight are known  
(cf. \cite[Theorem 3.13, p. 60]{eric}, \cite{sbou} for $2d\leq q$ and \cite{ledu} for any $d$). 

\subsection{Low weight codewords for large $q$}\label{largeq}
The dimension $n$ of the ambient space 
and the degree $d$ are fixed. We make a study for large values of $q$.
We suppose first that $q > d$.
Let us denote by ${\mathcal L\mathcal W}(q,d,n)$ the set of words $f$ (where $f$ is a reduced polynomial) 
of the Reed-Muller code ${\rm RM}_q(d,n)$
such that the set $Z_a(f)$ of zeros of $f$ is an union of $d$ distinct hyperplanes.

\begin{lemma}\label{mini}
Let $f$ be a reduced polynomial function in ${\mathcal F}(q,n)$ which is in ${\mathcal L\mathcal W}(q,d,n)$.
Then the number $N_a(f)$ of zeros in ${\mathbb F}_q^n$ is such that
\begin{equation}
 N_a(f) \geq dq^{n-1} - \frac{d(d-1)}{2}q^{n-2}.
\end{equation}
\end{lemma}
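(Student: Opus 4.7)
The plan is to use the Bonferroni inequality (the first truncation of inclusion--exclusion) applied to the decomposition of $Z_a(f)$ into its hyperplane components. By hypothesis we can write
\[
Z_a(f)=\bigcup_{i=1}^{d} H_i,
\]
where $H_1,\dots,H_d$ are pairwise distinct affine hyperplanes of $\mathbb{F}_q^n$. The goal is then to lower-bound the cardinality of this union.

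First I would recall that each affine hyperplane $H_i$ in $\mathbb{F}_q^n$ has exactly $q^{n-1}$ rational points, so that $\sum_{i=1}^{d}\#H_i=dq^{n-1}$. Next, for any two distinct hyperplanes $H_i$ and $H_j$, the intersection $H_i\cap H_j$ is either empty (parallel case) or an affine subspace of codimension $2$; in either case
\[
\#(H_i\cap H_j)\leq q^{n-2}.
\]
Applying Bonferroni's inequality
\[
\Bigl|\bigcup_{i=1}^{d}H_i\Bigr|\ \geq\ \sum_{i=1}^{d}\#H_i\ -\ \sum_{1\leq i<j\leq d}\#(H_i\cap H_j),
\]
and inserting the two bounds above, I get
\[
N_a(f)=\Bigl|\bigcup_{i=1}^{d}H_i\Bigr|\ \geq\ dq^{n-1}-\binom{d}{2}q^{n-2}\ =\ dq^{n-1}-\frac{d(d-1)}{2}q^{n-2},
\]
which is exactly the claimed inequality.

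There is no real obstacle: the only point to verify carefully is the bound $\#(H_i\cap H_j)\leq q^{n-2}$ for distinct hyperplanes, which follows directly from elementary affine geometry over $\mathbb{F}_q$ (two distinct affine hyperplanes are either parallel and disjoint, or meet in an affine subspace of dimension $n-2$). Note also that the argument does not require $q>d$ nor any hypothesis on the linear forms defining the $H_i$; the estimate holds for any union of $d$ distinct affine hyperplanes.
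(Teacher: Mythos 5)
Your proposal is correct and follows essentially the same route as the paper: decompose $Z_a(f)$ into the $d$ distinct hyperplanes and apply the truncated inclusion--exclusion (Bonferroni) bound, using $\#H_i=q^{n-1}$ and $\#(H_i\cap H_j)\leq q^{n-2}$. If anything, your treatment of the pairwise intersections is slightly more careful than the paper's, which states the sum of intersection sizes as an equality even though parallel hyperplanes give empty intersections; since that term is subtracted, the inequality $\leq q^{n-2}$ you use is exactly what is needed.
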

\begin{proof}
The set $Z_a(f)$ of zeros of $f$ is the union of the $d$ distinct hyperplanes $H_i$.
Then 
$$N_a(f)=\#Z_a(f) \geq  \sum_{i=1}^d \#H_i - \sum_{i\neq j} \#  \left( H_i \cap H_j\right).$$ 
But 
$$ \sum_{i\neq j}\#\left( H_i \cap H_j\right)
= \frac{d(d-1)}{2}q^{n-2}.$$
Then
$$N_a(f) \geq dq^{n-1} - \frac{d(d-1)}{2}q^{n-2}.$$
 \end{proof}

The two following lemmas are useful for the study of
irreducible but not absolutely irreducible polynomial functions. 
The first one is a key lemma which can be found 
in \cite{sore1}.The second one is a slight
modification of \cite[Theorem 2.1]{roll1}.

\begin{lemma}\label{mainlemma}
Let $f$ be a non-zero irreducible but not
absolutely irreducible polynomial over the finite field ${\mathbb F}_q$,
in $n$ variables and of degree $d$. Then one can find
a finite extension ${\mathbb F}_{q'}$ such that there exists a unique polynomial
$g$ absolutely irreducible over the finite field ${\mathbb F}_{q'}$, in
$n$ variables and of degree $d'$, satisfying:
$$f=\prod_{\sigma \in G} g^{\sigma},$$
where $G=Gal({\mathbb F}_{q'}/{\mathbb F}_{q})$ is the Galois group of
${\mathbb F}_{q'}$ over ${\mathbb F}_q$ and
$$Deg(f)=[{\mathbb F}_{q'}:{\mathbb F}_q]Deg(g).$$
\end{lemma}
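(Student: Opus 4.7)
The plan is to work in the algebraic closure $\overline{\mathbb{F}_q}$, factor $f$ there into absolutely irreducible factors, and then exploit the fact that the absolute Galois group permutes these factors to recover the claimed product structure. The key external input is that $\mathbb{F}_q$ is a perfect field, so absolute Galois orbits behave well, and that $\mathbb{F}_q[X_1,\dots,X_n]$ is a UFD.

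First I would factor $f = c\, g_1^{e_1}\cdots g_r^{e_r}$ in $\overline{\mathbb{F}_q}[X_1,\dots,X_n]$ with the $g_i$ pairwise non-associate absolutely irreducible. To make the decomposition canonical I would normalize each $g_i$ (for instance, by requiring it to be monic with respect to a fixed monomial order). The absolute Galois group $\Gamma=\mathrm{Gal}(\overline{\mathbb{F}_q}/\mathbb{F}_q)$ acts on polynomials by acting on the coefficients and, because $f \in \mathbb{F}_q[X_1,\dots,X_n]$, it permutes the normalized factors $g_i$ and preserves the multiplicities. Fix $g := g_1$ and let $H \leq \Gamma$ be its stabilizer. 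Since the orbit is finite, $H$ is an open subgroup of $\Gamma$, its fixed field is a finite extension $\mathbb{F}_{q'}/\mathbb{F}_q$, and the orbit of $g$ has size $[\mathbb{F}_{q'}:\mathbb{F}_q]=|G|$ with $G=\mathrm{Gal}(\mathbb{F}_{q'}/\mathbb{F}_q)$. By construction the coefficients of $g$ lie in $\mathbb{F}_{q'}$, and $g$ is absolutely irreducible because it was chosen absolutely irreducible in $\overline{\mathbb{F}_q}[X_1,\dots,X_n]$.

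Next I would form $F := \prod_{\sigma \in G} g^{\sigma}$. By Galois invariance and the normalization, $F$ has coefficients in $\mathbb{F}_q$. Since each $g^{\sigma}$ is a factor of $f$ in $\overline{\mathbb{F}_q}[X_1,\dots,X_n]$ and the factors in the orbit are pairwise non-associate, $F$ divides $f$ in $\overline{\mathbb{F}_q}[X_1,\dots,X_n]$, hence (by Gauss-type descent of divisibility from $\overline{\mathbb{F}_q}$ down to $\mathbb{F}_q$) also in $\mathbb{F}_q[X_1,\dots,X_n]$. Because $f$ is irreducible over $\mathbb{F}_q$, this forces $f = \lambda F$ for some $\lambda \in \mathbb{F}_q^{*}$, which I absorb into $g$. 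The same irreducibility argument rules out repeated factors: if some $e_i > 1$, then by Galois invariance every orbit element appears with the same exponent $e$, giving $f = \lambda F^{e}$, which contradicts irreducibility of $f$ over $\mathbb{F}_q$ unless $e=1$. Taking degrees in the identity $f=\prod_{\sigma\in G} g^{\sigma}$ yields $\deg(f)=|G|\deg(g)=[\mathbb{F}_{q'}:\mathbb{F}_q]\deg(g)$, and uniqueness of $g$ up to constants follows from the uniqueness of the factorization in $\overline{\mathbb{F}_q}[X_1,\dots,X_n]$.

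The main obstacle I anticipate is the bookkeeping around normalization: one must fix a convention (monic in a chosen variable, or leading monomial equal to $1$ in a fixed term order) that is both preserved by the Galois action and guarantees that a Galois-invariant product actually lies in $\mathbb{F}_q[X_1,\dots,X_n]$ rather than merely being Galois-invariant up to scalars. Once this is handled cleanly, the remainder of the argument (orbit structure, descent of divisibility, ruling out repeated factors via irreducibility) is routine.
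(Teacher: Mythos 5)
The paper never proves Lemma \ref{mainlemma}: it is quoted as a known result from S{\o}rensen \cite{sore1}, so there is no internal proof to compare against. Your Galois-orbit argument is the standard proof of this statement and is essentially correct: factor $f$ over $\overline{\mathbb{F}_q}$ into normalized absolutely irreducible factors, let the absolute Galois group permute them, take $\mathbb{F}_{q'}$ to be the fixed field of the stabilizer of one factor $g$, and use irreducibility of $f$ over $\mathbb{F}_q$ together with descent of the invariant product $F=\prod_{\sigma\in G}g^{\sigma}$ to get $f=\lambda F$ with $\lambda\in\mathbb{F}_q^{*}$. Three small points deserve tightening. First, ``absorbing'' $\lambda$ into $g$ is not a mere rescaling by $\lambda$: replacing $g$ by $\mu g$ multiplies $F$ by $N_{\mathbb{F}_{q'}/\mathbb{F}_q}(\mu)$, so you need $\mu\in\mathbb{F}_{q'}^{*}$ of norm $\lambda$; such $\mu$ exists because the norm map between finite fields is surjective, but this should be said. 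Second, the openness of the stabilizer $H$ is more safely justified by noting that the finitely many coefficients of $g$ lie in some finite subextension $\mathbb{F}_{q^m}$, so $H$ contains the open subgroup $\mathrm{Gal}(\overline{\mathbb{F}_q}/\mathbb{F}_{q^m})$; the implication ``finite orbit, hence open stabilizer'' is not automatic in an arbitrary profinite group and deserves this one-line argument. Third, your final step ruling out repeated factors is redundant: once the $\mathbb{F}_q$-polynomial $F$ divides the irreducible $f$, the equality $f=\lambda F$ already excludes extra or repeated factors; similarly, the uniqueness of $g$ can only be meant up to Galois conjugation and constants, which is the sense in which both your argument and the cited statement should be read. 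With these cosmetic repairs your proof is a complete, self-contained substitute for the external reference.
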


\begin{lemma}\label{mlem}
Let $f\in{\mathcal RP}(q,n,d)$ be an irreducible but not absolutely irreducible
polynomial of degree $d>1$. Let us set $a$ and $b$ such that $d=a(q-1)+b$ and
$0 \leq b <q-1$. Denote by $u$ a number less than or equal to the smallest prime factor of $d$.
Then the number $N_a(f)$ of zeros of $f$ over ${\mathbb{F}}_q$ satisfies:
\begin{equation}
 N_a(f)< q^n -2q^{n-\left\lfloor{\frac{d}{u(q-1)}}\right\rfloor-1}.
\end{equation}
Moreover if $a=0$ 
\begin{equation}
 N_a(f)< \frac{d}{u}q^{n-1}.
\end{equation}
\end{lemma}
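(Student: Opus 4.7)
The plan hinges on Lemma~\ref{mainlemma}: write $f = \prod_{\sigma \in G} g^\sigma$ with $g \in \mathbb{F}_{q'}[X_1,\ldots,X_n]$ absolutely irreducible of degree $d'$, where $G = \mathrm{Gal}(\mathbb{F}_{q'}/\mathbb{F}_q)$ has order $s = [\mathbb{F}_{q'}:\mathbb{F}_q] \geq 2$. Since $s$ divides $d = s d'$, $s$ is at least the smallest prime factor of $d$, hence $s \geq u$ and $d' \leq d/u$. Writing $d' = a'(q-1)+b'$ with $0 \leq b' < q-1$, one has $a' \leq \lfloor d/(u(q-1)) \rfloor$. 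For $x \in \mathbb{F}_q^n$, $f(x) = N_{\mathbb{F}_{q'}/\mathbb{F}_q}(g(x))$ vanishes iff $g(x) = 0$, so the count reduces to $N_a(f) = \#\{x \in \mathbb{F}_q^n : g(x) = 0\}$, and the question becomes bounding the $\mathbb{F}_q$-zeros of $g$ via objects defined over $\mathbb{F}_q$.

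To this end, I would consider the trace family
$$ T_\alpha := \mathrm{Tr}_{\mathbb{F}_{q'}/\mathbb{F}_q}(\alpha g) \;=\; \sum_{\sigma \in G} \sigma(\alpha)\, g^\sigma \;\in\; \mathbb{F}_q[X_1,\ldots,X_n], \qquad \alpha \in \mathbb{F}_{q'}. $$
Each $T_\alpha$ has degree at most $d'$, lies in $\mathbb{F}_q[X_1,\ldots,X_n]$, and satisfies $T_\alpha(x) = \mathrm{Tr}(\alpha\, g(x)) = 0$ whenever $g(x) = 0$. The crux is to produce two $\alpha_1, \alpha_2$ for which $T_{\alpha_1}$ and $T_{\alpha_2}$ are $\mathbb{F}_q$-linearly independent as polynomial functions on $\mathbb{F}_q^n$. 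Otherwise $g|_{\mathbb{F}_q^n}$ would take values in a single $1$-dimensional $\mathbb{F}_q$-subspace $\mathbb{F}_q\mu \subset \mathbb{F}_{q'}$, forcing $f$ to coincide, as a polynomial function, with $N(\mu)\, h^s$ for an $\mathbb{F}_q$-valued polynomial function $h$; comparing reduced forms and using $s \geq 2$ then contradicts the irreducibility of $f$.

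Granted this non-degeneracy, on the $2$-dimensional $\mathbb{F}_q$-subspace $V = \mathrm{span}(T_{\alpha_1}, T_{\alpha_2})$ every non-zero element has weight at least $W_a^{(1)}(q,n,d')$ by Kasami--Lin--Peterson. Double-counting pairs $(v,x) \in (V\setminus\{0\}) \times \mathbb{F}_q^n$ with $v(x) \neq 0$ gives
$$ (q^2-q)\bigl(q^n - N_a(f)\bigr) \;\geq\; (q^2-1)\, W_a^{(1)}(q,n,d'), $$
whence $N_a(f) \leq q^n - \frac{(q+1)(q-b')}{q}\, q^{n-a'-1}$. Since $b' \leq q-2$, the coefficient $(q+1)(q-b')/q$ strictly exceeds $2$, and combining with $a' \leq \lfloor d/(u(q-1)) \rfloor$ delivers the first bound. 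When $a = 0$, $d' \leq d/u < q$ forces $a' = 0$ and $b' = d'$; the inequality $d/u < q$ then turns the same estimate into $N_a(f) < (d/u)\, q^{n-1}$. The main obstacle I expect is the non-degeneracy step, i.e., ruling out the configuration where $g|_{\mathbb{F}_q^n}$ spans only a single $\mathbb{F}_q$-line in $\mathbb{F}_{q'}$; once that case is disposed of, the averaging estimate and the two final inequalities are routine.
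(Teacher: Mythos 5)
Your quantitative half is fine, but the step you yourself call the crux is not proved as written. From the degeneracy hypothesis you correctly obtain an identity of \emph{polynomial functions}, $f(x)=N_{\mathbb{F}_{q'}/\mathbb{F}_q}(\mu)\,h(x)^s$ for all $x\in\mathbb{F}_q^n$; but ``comparing reduced forms'' of $f$ and of the function $N(\mu)h^s$ does not contradict the irreducibility of the \emph{polynomial} $f$. Reduction modulo $(X_1^q-X_1,\dots,X_n^q-X_n)$ destroys multiplicative structure: the reduced representative of the function $h^s$ need not be an $s$-th power, so no factorization of $f$ is exhibited. Indeed the principle you invoke is false in general --- over $\mathbb{F}_2$ every polynomial function satisfies $h^2=h$, so any irreducible reduced $f$ coincides, as a function, with a square --- and even with the degree constraint $\deg h\le d'$ available in your situation, $h^s$ need not be reduced (its partial degrees can reach $s(q-1)$), so equality of $f$ with the reduction of $h^s$ gives no contradiction by itself. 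As stated, the non-degeneracy step therefore has a real hole.

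The hole is closable, and the observation needed is exactly the one that makes the paper's argument work at the polynomial level. Since $f\in{\mathcal RP}(q,n,d)$ and $\deg_{X_i}f=s\,\deg_{X_i}g$ for every $i$, the polynomial $g$ is itself reduced, hence so is every trace polynomial $T_\alpha$ (its partial degrees are bounded by those of $g$). Reduced polynomials are determined by their functions, so if the $T_\alpha$ span at most a line as functions they span at most a line as polynomials; equivalently, writing $g=\sum_j h_j w_j$ with $h_j\in\mathbb{F}_q[X_1,\dots,X_n]$ reduced (these span the same $\mathbb{F}_q$-space as the $T_\alpha$), one gets $g=\mu h$ with $h\in\mathbb{F}_q[X_1,\dots,X_n]$ as polynomials, whence $f=\prod_{\sigma}g^{\sigma}=N_{\mathbb{F}_{q'}/\mathbb{F}_q}(\mu)\,h^s$ as polynomials, contradicting irreducibility since $s\ge 2$ and $\deg h=d'\ge 1$. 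With this patch your proof is complete, and its second half is genuinely different from the paper's: the paper decomposes $g=\sum_j h_jw_j$ and uses the Delsarte--Goethals--MacWilliams characterization of maximal polynomials to make the Kasami--Lin--Peterson bound strict, whereas your averaging over the pencil $V$ avoids DGM altogether and yields the slightly stronger estimate $N_a(f)\le q^n-\frac{(q+1)(q-b')}{q}\,q^{n-a'-1}$, from which both stated inequalities (including the $a=0$ case) follow as you indicate.
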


\begin{proof}
Using the lemma \ref{mainlemma} we get:
$$Z_a(f)=\bigcup_{\sigma \in G}Z_a(g^{\sigma}).$$
However all the conjugate polynomials $g^{\sigma}$ have the same
zeros in ${\mathbb F}_q$. Hence $Z_a(f)=Z_a(g)$.\par
Let us denote by $s$ the dimension $[{\mathbb F}_{q'}:{\mathbb F}_q]$
of the vector space ${\mathbb F}_{q'}$ over the field ${\mathbb F}_q$.
We know that:
$$d=Deg(f)=sDeg(g)=sd'.$$
If $(w_1,\cdots,w_s)$ is a basis of ${\mathbb F}_{q'}$ over ${\mathbb F}_q$:
$$g(X)=\sum_{j=1}^s h_j(X)w_j,$$
where $h_j \in {\mathcal R}{\mathcal P}(q,d',n)$ and are not all zero.
Hence,
$$Z_a(f)=\bigcap_{j=1}^sZ_a(h_j).$$
All the non-zero $h_j$ cannot be the same products of
degree one polynomials
(in this case, $g$ would be proportional to a polynomial
over ${\mathbb F}_q$),
so that, by the result of Delsarthe, Goethals,
McWilliams \cite{degoma}, $\# Z_a(f)$ cannot attain the
maximum number of zeros given by the formula of Kasami, Lin, Peterson
(\cite{kalipe}):
$$\# Z_a(f)<q^n-(q-b')q^{n-a'-1}$$
where $d'=a'(q-1)+b'$ and $0 \leq b' <q-1$. 
But  
$a'$ is the integer part of $d'/q-1$, namely:
$$a'=\left\lfloor{\frac{d'}{q-1}}\right\rfloor=\left\lfloor{\frac{d}{s(q-1)}}\right\rfloor.$$
In any case:
$$\# Z_a(f)<q^n-(q-(q-2))q^{n-\left\lfloor{\frac{d}{s(q-1)}}\right\rfloor-1}.$$
 As $s$ divides $d$ we have $u \leq s$ and consequently
$$\# Z_a(f)<q^n-2q^{n-\left\lfloor{\frac{d}{u(q-1)}}\right\rfloor-1}.$$
Now, if $a=0$ then $a'=0$ and we can improve the previous
estimate. In this case we know that $b'=d'=d/s$, so that:
$$\# Z_a(f)<q^n-(q-d/s)q^{n-1}.$$
As $s$ divides $d$ we have $u \leq s$ and consequently
the following inequality holds:
$$\# Z_a(f)<{\frac{d}{s}} q^{n-1} \leq {\frac{d}{u}}q^{n-1}.$$
Let us remark that $2 \leq u$ so that if we replace $u$ by $2$, formulas are still valid.
 \end{proof}

\begin{lemma}\label{irrbutnotabs}
Let $g\in {\mathcal F}(q,n)$ such that $\deg(g) \leq d$.
Suppose that $g=g_1 g_2$ where $g_1$ is an irreducible but not absolutely irreducible
polynomial of degree $d' \geq 2$.
Then
$$N_a(g) < \left(d-\frac{d'}{2}\right)q^{n-1}\leq (d-1)q^{n-1}.$$
\end{lemma}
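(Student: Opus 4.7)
The plan is to bound $N_a(g)$ via a union-of-zero-sets argument. Since $g=g_1g_2$, we have $Z_a(g)=Z_a(g_1)\cup Z_a(g_2)$, hence
$$N_a(g)\leq N_a(g_1)+N_a(g_2),$$
and it suffices to estimate each summand separately.

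For the factor $g_2$ I would invoke the elementary Schwartz--Zippel-type bound: any non-zero polynomial of total degree $e$ in $n$ variables over $\mathbb{F}_q$ has at most $eq^{n-1}$ zeros in $\mathbb{F}_q^n$ (easy induction on $n$, and a weak consequence of Kasami--Lin--Peterson). Since $\deg(g_2)\leq d-d'$ this gives
$$N_a(g_2)\leq(d-d')q^{n-1}.$$

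For the harder term $N_a(g_1)$ I would appeal to Lemma \ref{mlem} with $u=2$, admissible because $d'\geq 2$ forces the smallest prime factor of $d'$ to be at least $2$. Under the standing hypothesis $q>d$ of Section \ref{largeq} one has $d'\leq d<q$, so the absolutely irreducible factor of $g_1$ produced by Lemma \ref{mainlemma} has degree $d''=d'/s\leq d'/2<q-1$, where $s\geq 2$. In particular the Euclidean division attached to $d''$ is $d''=0\cdot(q-1)+d''$, i.e.\ $a''=0$ and $b''=d''$, so the Kasami--Lin--Peterson inequality used inside the proof of Lemma \ref{mlem} yields
$$N_a(g_1)<q^n-(q-d'')q^{n-1}=d''q^{n-1}\leq\tfrac{d'}{2}q^{n-1}.$$

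Adding the two estimates gives $N_a(g)<(d-\tfrac{d'}{2})q^{n-1}$, and the second inequality of the statement is immediate from $d'\geq 2$. I do not expect a real obstacle; the one mild subtlety is the borderline case $d'=q-1$, where the conveniently packaged ``$(d'/u)q^{n-1}$'' alternative of Lemma \ref{mlem} is not directly available (it is stated under the assumption that the parameter $a$ of $g_1$ itself is zero, whereas here $a=1$). In that case one has to go one step inside the proof of Lemma \ref{mlem} and apply Kasami--Lin--Peterson directly to the absolutely irreducible factor of degree $d''\leq (q-1)/2$, which still produces exactly the bound $d''q^{n-1}$ needed above.
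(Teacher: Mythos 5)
Your proof is correct and follows essentially the same route as the paper's: bound $N_a(g)\leq N_a(g_1)+N_a(g_2)$, apply Lemma \ref{mlem} with $u=2$ to the irreducible but not absolutely irreducible factor $g_1$ to get $N_a(g_1)<\frac{d'}{2}q^{n-1}$, and use the elementary degree bound $N_a(g_2)\leq (d-d')q^{n-1}$ for the nonzero cofactor. Your treatment of the borderline case $d'=q-1$ (where the packaged ``$a=0$'' alternative of Lemma \ref{mlem} does not literally apply and one must rerun its internal Kasami--Lin--Peterson/Delsarte--Goethals--MacWilliams argument on the component polynomials of degree $d'/s<q-1$) is extra care on a point the paper's own one-line citation glosses over, not a departure from its method.
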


\begin{proof}
By Lemma (\ref{mlem}) we know that 
$$N_a(g_1) < \frac{d'}{2}q^{n-1}.$$
On the other hand, as $g_2$ is not the zero polynomial,
$$N_a(g_2) \leq (d-d')q^{n-1}.$$
Then
$$N_a(g) \leq N_a(g_1)+N_a(g_2) < \left(d-d'+\frac{d'}{2}\right)q^{n-1}=\left(d-\frac{d'}{2}\right)q^{n-1}.$$
As $d'\geq 2$, we have
$$N_a(g)<(d-1)q^{n-1}.$$
 \end{proof}

\begin{proposition}\label{compare1}
 Let $g\in {\mathcal F}(q,n)$ such that $\deg(g) \leq d$.
Suppose that $g=g_1 g_2$ where $g_1$ is an irreducible but not absolutely irreducible
polynomial of degree $d' \geq 2$.
Then if $q \geq \frac{d(d-1)}{2}$,
for any $f \in {\mathcal L\mathcal W}(q,d,n)$ the following inequality holds:
$$N_a(f) > N_a(g).$$
\end{proposition}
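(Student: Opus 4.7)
The proposition pits the lower bound on $N_a(f)$ coming from Lemma \ref{mini} against the upper bound on $N_a(g)$ coming from Lemma \ref{irrbutnotabs}, so the plan is essentially to combine these two estimates and watch how the hypothesis $q\geq d(d-1)/2$ bridges them.

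First I would invoke Lemma \ref{mini}: since $f\in\mathcal{LW}(q,d,n)$, its zero set is a union of $d$ distinct hyperplanes, whence
\[
N_a(f)\;\geq\;dq^{n-1}-\tfrac{d(d-1)}{2}q^{n-2}.
\]
Next I would apply Lemma \ref{irrbutnotabs} to the factorisation $g=g_1g_2$ with $g_1$ irreducible but not absolutely irreducible of degree $d'\geq 2$, giving
\[
N_a(g)\;<\;(d-1)q^{n-1}.
\]

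The proof then reduces to the numerical claim that the lower bound on $N_a(f)$ is at least $(d-1)q^{n-1}$. Factoring out $q^{n-2}$, this amounts to
\[
dq-\tfrac{d(d-1)}{2}\;\geq\;(d-1)q,
\]
i.e.\ $q\geq d(d-1)/2$, which is exactly the hypothesis. Chaining the inequalities yields
\[
N_a(f)\;\geq\;dq^{n-1}-\tfrac{d(d-1)}{2}q^{n-2}\;\geq\;(d-1)q^{n-1}\;>\;N_a(g),
\]
as required.

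There is no real obstacle here; the work has all been done in Lemmas \ref{mini} and \ref{irrbutnotabs}. The only point to watch is the direction of the inequalities: the bound on $N_a(g)$ is strict while the bound on $N_a(f)$ is not, so one needs $q\geq d(d-1)/2$ rather than $q>d(d-1)/2$, and this is what the hypothesis provides. Everything else is arithmetic.
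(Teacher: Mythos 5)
Your proof is correct and follows the same route as the paper: both combine the lower bound of Lemma \ref{mini} for $f\in\mathcal{LW}(q,d,n)$ with the strict upper bound $N_a(g)<(d-1)q^{n-1}$ from Lemma \ref{irrbutnotabs}, and the hypothesis $q\geq d(d-1)/2$ closes the gap exactly as in the paper (the paper phrases it via the difference $N_a(f)-N_a(g)>q^{n-1}-\tfrac{d(d-1)}{2}q^{n-2}$, you via chaining, which is only a cosmetic distinction). Your remark about where the strictness comes from is accurate and consistent with the paper's argument.
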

\begin{proof}
 We know by Lemma \ref{mini}  that
$$N_a(f) \geq dq^{n-1} - \frac{d(d-1)}{2}q^{n-2}$$
and by Lemma \ref{irrbutnotabs} that
$$N_a(g) < (d-1)q^{n-1}.$$
Then 
$$N_a(f)- N_a(g) > q^{n-1} - \frac{d(d-1)}{2}q^{n-2}.$$
Hence if 
$$q \geq \frac{d(d-1)}{2},$$
we have
$$N_a(f)- N_a(g) >0.$$
 \end{proof}

\begin{lemma}\label{likeweil}
For any absolutely irreducible polynomial function $h$ in ${\mathcal F}(q,n)$
of degree $\leq d$ the following inequality holds:
$$\left | N_a(h) -q^{n-1} \right | \leq A(d)q^{n-\frac{3}{2}}+B(d)q^{n-2},$$
where 
$$A(d)=\sqrt{2}d^{\frac{5}{2}} \hbox{ and } B(d)=4d^2k^{2^k} \hbox{ with } k=\frac{d(d+1)}{2}.$$
\end{lemma}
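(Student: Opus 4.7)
The inequality is a Lang--Weil type estimate for the number of $\mathbb{F}_q$-rational points on the affine hypersurface $Z_a(h)$, with explicit constants depending only on $d$. The plan is to proceed by induction on the number of variables $n$, using Weil's theorem for absolutely irreducible curves as the base case and a hyperplane slicing argument in the inductive step.

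For the base case $n = 2$, $h$ defines an absolutely irreducible affine plane curve of degree $\leq d$. Passing to its projective closure and then to the smooth model, Weil's bound yields
$$|N_a(h) - q| \leq (d-1)(d-2)\sqrt{q} + C_0(d),$$
where $C_0(d)$ accounts for the contribution of the singularities (at most $\binom{d-1}{2}$ of them, by the genus--degree formula) and the at most $d$ points at infinity. The inequalities $(d-1)(d-2) \leq \sqrt{2}\, d^{5/2}$ and $C_0(d) \leq 4d^{2}$ hold for $d \geq 2$, so the stated bound is comfortably satisfied, with plenty of slack.

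For the inductive step, assume the bound for $n-1$. Consider the slices
$$h_c(X_1, \ldots, X_{n-1}) := h(X_1, \ldots, X_{n-1}, c), \qquad c \in \mathbb{F}_q,$$
so that $N_a(h) = \sum_{c \in \mathbb{F}_q} N_a(h_c)$. A Noether--Bertini type argument (in its effective form) ensures that the number of $c$ for which $h_c$ fails to be an absolutely irreducible polynomial of degree $d$ in $n-1$ variables is bounded by an explicit quantity $E(d)$ depending only on $d$. For the good values of $c$ the inductive hypothesis gives
$$|N_a(h_c) - q^{n-2}| \leq A(d)\, q^{n-5/2} + B(d)\, q^{n-3},$$
while for the bad values the trivial bound $N_a(h_c) \leq d\, q^{n-2}$ suffices. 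Summing and using the triangle inequality,
$$|N_a(h) - q^{n-1}| \leq A(d)\, q^{n-3/2} + \bigl(B(d) + d\, E(d)\bigr) q^{n-2},$$
and one then verifies that the choice $B(d) = 4d^{2}\, k^{2^{k}}$ with $k = d(d+1)/2$ is large enough to absorb the added $d\, E(d)$ term at every level of the induction.

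The main obstacle is the effective bound $E(d)$ on the number of bad slices. Conceptually, if $h$ is absolutely irreducible but $h_c$ factors over $\overline{\mathbb{F}}_q$ for too many values of $c$, one could interpolate these factorizations and produce a nontrivial factorization of $h$ itself, a contradiction. Turning this reasoning into a usable numerical bound requires quantitative estimates on the coefficients of factors of specializations of a polynomial in several variables; iterating such coefficient bounds through the $n-2$ layers of the recursion is precisely what forces the doubly exponential growth $k^{2^{k}}$ in $B(d)$. Once $E(d)$ is controlled in this way, all the remaining steps reduce to elementary estimates.
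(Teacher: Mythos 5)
The paper does not prove this lemma at all: its ``proof'' is a one-line citation of Schmidt, \emph{Equations over Finite Fields: An Elementary Approach}, Theorem 5A, p.~210, which is exactly this statement with exactly these constants $A(d)=\sqrt{2}\,d^{5/2}$ and $B(d)=4d^2k^{2^k}$, $k=d(d+1)/2$. Your sketch follows the same general lines as Schmidt's elementary argument (induction on $n$, the Weil--Stepanov bound for absolutely irreducible plane curves as the base case, hyperplane slicing plus a count of the degenerate slices), but as a proof it has a genuine gap: the entire difficulty is concentrated in the quantity you call $E(d)$, the effective bound on the number of $c$ for which $h_c$ fails to be absolutely irreducible of the right degree. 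You state that ``a Noether--Bertini type argument (in its effective form) ensures'' such a bound, but you neither prove it nor quote a precise effective statement; this is the step that requires Noether's forms (or an equivalent quantitative elimination argument) and is precisely where the doubly exponential quantity $k^{2^k}$ comes from. Without it, the specific constants in the statement cannot be ``verified'' -- they are the output of that analysis, not something one can check after the fact.

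There is also a bookkeeping problem in your inductive step as written. Summing the good slices contributes $A(d)q^{n-3/2}+B(d)q^{n-2}$ and the bad slices add a further term of order $E(d)\,d\,q^{n-2}$, so the coefficient of $q^{n-2}$ you obtain at level $n$ is $B(d)+dE(d)$, strictly larger than $B(d)$; a fixed constant cannot ``absorb the added $dE(d)$ term at every level'' since $B(d)\geq B(d)+dE(d)$ is impossible. The standard repair (and Schmidt's) is to split according to the size of $q$: for $q$ below an explicit threshold depending only on $d$ the inequality is trivial because $B(d)q^{n-2}$ already exceeds $(d-1)q^{n-1}\geq |N_a(h)-q^{n-1}|$, while for $q$ above that threshold the bad-slice contribution $E(d)\,d\,q^{n-2}\leq E(d)\,d\,q^{-1/2}\cdot q^{n-3/2}$ can be absorbed into the $q^{n-3/2}$ term. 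Since the lemma is a known theorem with published explicit constants, the efficient route here is the paper's: cite Schmidt's Theorem 5A rather than reconstruct its proof.
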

\begin{proof}
 See \cite[Theorem 5A, p. 210]{schm}.
 \end{proof}

\begin{lemma}\label{absirr}
Let $g\in {\mathcal F}(q,n)$ such that $\deg(g) \leq d$.
Suppose that $g=g_1 g_2$ where $g_1$ is an absolutely irreducible
polynomial of degree $d' \geq 2$.
Then 
$$N_a(g) \leq (d-1)q^{n-1}+A(d)q^{n-\frac{3}{2}}+B(d)q^{n-2}.$$
\end{lemma}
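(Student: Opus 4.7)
The plan is to bound $N_a(g)$ by splitting the zero set according to the factorization $g = g_1 g_2$ and applying the available estimate for each factor separately.

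First I would use the obvious set-theoretic inclusion $Z_a(g) = Z_a(g_1) \cup Z_a(g_2)$, which gives the union bound
\[
N_a(g) \leq N_a(g_1) + N_a(g_2).
\]
For the first factor, $g_1$ is absolutely irreducible of degree $d' \leq d$, so Lemma \ref{likeweil} applied to $g_1$ yields
\[
N_a(g_1) \leq q^{n-1} + A(d')q^{n-\frac{3}{2}} + B(d')q^{n-2}.
\]
Since both $A(d) = \sqrt{2}\,d^{5/2}$ and $B(d) = 4d^{2}k^{2^{k}}$ (with $k=d(d+1)/2$) are manifestly increasing in $d$, and $d' \leq d$, this gives
\[
N_a(g_1) \leq q^{n-1} + A(d)q^{n-\frac{3}{2}} + B(d)q^{n-2}.
\]

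For the second factor I would use the elementary Ore/Schwartz--Zippel-type bound: $g_2$ is a nonzero polynomial function of degree at most $d - d'$, hence
\[
N_a(g_2) \leq (d-d')\,q^{n-1}.
\]
(If $g_2$ is a nonzero constant the bound is $0$; otherwise it is the standard estimate that also underlies the Kasami--Lin--Peterson minimum distance formula.) Combining the two pieces and using the hypothesis $d' \geq 2$,
\[
N_a(g) \leq (d-d'+1)q^{n-1} + A(d)q^{n-\frac{3}{2}} + B(d)q^{n-2} \leq (d-1)q^{n-1} + A(d)q^{n-\frac{3}{2}} + B(d)q^{n-2},
\]
which is exactly the claimed inequality.

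There is no real obstacle here: the statement is essentially a bookkeeping consequence of Lemma \ref{likeweil} together with the trivial bound on the number of zeros of a factor of bounded degree. The only point worth being careful about is the monotonicity of $A$ and $B$ in $d$, which allows us to replace $A(d'),B(d')$ by $A(d),B(d)$, and the use of $d' \geq 2$ to absorb the $+1$ coming from the Schmidt-type main term of $g_1$ into the linear factor $(d-1)q^{n-1}$.
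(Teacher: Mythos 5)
Your argument is correct and is essentially identical to the paper's proof: the same union bound $N_a(g)\leq N_a(g_1)+N_a(g_2)$, Lemma \ref{likeweil} applied to the absolutely irreducible factor $g_1$, the trivial bound $(d-d')q^{n-1}$ for the nonzero factor $g_2$, and the monotonicity of $A$ and $B$ together with $d'\geq 2$ to reach the stated estimate. Nothing is missing.
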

\begin{proof}
$$N_a(g)\leq N_a(g_1)+N_a(g_2).$$
Lemma \ref{likeweil} gives an upper bound for $N_a(g_1)$
and as $g_2$ is not zero, $g_2$ is bounded by $(d-d')q^{n-1}$.
Then
$$N_a(g)\leq (d-d')q^{n-1} +q^{n-1}+A(d')q^{n-\frac{3}{2}}+B(d')q^{n-2},$$
$$N_a(g)\leq (d+1-d')q^{n-1}+A(d')q^{n-\frac{3}{2}}+B(d')q^{n-2},$$
and as $d'\geq 2$ and $A()$, $B()$ are increasing functions 
$$N_a(g)\leq (d-1)q^{n-1}+A(d)q^{n-\frac{3}{2}}+B(d)q^{n-2}.$$
 \end{proof}

\begin{proposition}\label{compare2}
 Let $g\in {\mathcal F}(q,n)$ such that $\deg(g) \leq d$.
Suppose that $g=g_1 g_2$ where $g_1$ is an absolutely irreducible 
polynomial of degree $d' \geq 2$.
Then if $q > q_0(d)$,
where 
$$q_0(d)=\left (\frac{A(d)+\sqrt{A(d)^2+4C(d)}}{2}\right)^2
\hbox{ with } C(d)=B(d)+\frac{d(d-1)}{2},$$
for any $f \in {\mathcal L\mathcal W}(q,d,n)$ the following inequality holds:
$$N_a(f) > N_a(g).$$
\end{proposition}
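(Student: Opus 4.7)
The plan is to mimic the structure of Proposition \ref{compare1}, replacing Lemma \ref{irrbutnotabs} with Lemma \ref{absirr}, and then solve the resulting quadratic condition in $\sqrt{q}$.

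First I would combine the two bounds. By Lemma \ref{mini}, any $f\in\mathcal{LW}(q,d,n)$ satisfies
$$N_a(f)\geq dq^{n-1}-\frac{d(d-1)}{2}q^{n-2},$$
while Lemma \ref{absirr} gives
$$N_a(g)\leq (d-1)q^{n-1}+A(d)q^{n-\frac{3}{2}}+B(d)q^{n-2}.$$
Subtracting, we obtain
$$N_a(f)-N_a(g)\geq q^{n-1}-A(d)q^{n-\frac{3}{2}}-B(d)q^{n-2}-\frac{d(d-1)}{2}q^{n-2},$$
which after factoring $q^{n-2}$ becomes
$$N_a(f)-N_a(g)\geq q^{n-2}\left(q-A(d)\sqrt{q}-C(d)\right),$$
with $C(d)=B(d)+\tfrac{d(d-1)}{2}$ as in the statement.

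Next I would interpret the right-hand side as a quadratic in $x=\sqrt{q}$. The expression $x^{2}-A(d)x-C(d)$ is positive precisely when $x$ exceeds the larger root of $x^{2}-A(d)x-C(d)=0$, namely
$$x_{0}=\frac{A(d)+\sqrt{A(d)^{2}+4C(d)}}{2}.$$
Thus the condition $N_a(f)>N_a(g)$ is guaranteed by $\sqrt{q}>x_{0}$, i.e. $q>x_{0}^{2}=q_{0}(d)$, which is exactly the hypothesis. This gives the strict inequality $N_a(f)>N_a(g)$.

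There is no real obstacle: the argument is essentially arithmetic once Lemmas \ref{mini} and \ref{absirr} are in hand. The only delicate point is making sure the bound in Lemma \ref{absirr} is applied in the form that accounts for both factors $g_1$ and $g_2$, so that the dominant term on the right is $(d-1)q^{n-1}$ rather than $dq^{n-1}$; without the saving of one factor of $q^{n-1}$ coming from the Weil-type estimate on $g_1$, the leading terms in $N_a(f)$ and $N_a(g)$ would cancel and no threshold on $q$ would suffice.
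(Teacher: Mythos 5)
Your proof is correct and follows essentially the same route as the paper: combine Lemma \ref{mini} with Lemma \ref{absirr}, factor out $q^{n-2}$, and note that $q - A(d)\sqrt{q} - C(d) > 0$ exactly when $\sqrt{q}$ exceeds the larger root of the quadratic, i.e.\ $q > q_0(d)$. Your closing remark about needing the $(d-1)q^{n-1}$ leading term (rather than $dq^{n-1}$) correctly identifies why the Weil-type saving on $g_1$ is essential, which is implicit in the paper's argument.
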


\begin{proof}
We know by Lemma \ref{mini} that
$$N_a(f) \geq dq^{n-1} - \frac{d(d-1)}{2}q^{n-2}$$
and by Lemma \ref{absirr} that
$$N_a(g) \leq (d-1)q^{n-1}+A(d)q^{n-\frac{3}{2}}+B(d)q^{n-2}.$$
Then we have
$$N_a(f)- N_a(g) \geq q^{n-1} -A(d)q^{n-\frac{3}{2}} -C(d)q^{n-2},$$
$$N_a(f)- N_a(g) \geq q^{n-2} \left(q-A(d)\sqrt{q} -C(d)\right).$$
As $q-A(d)\sqrt{q} -C(d)$ is a quadratic polynomial in $\sqrt{q}$ we can conclude that
if $q>q_0(d)$ then
$$N_a(f)- N_a(g) >0.$$
 \end{proof}

\begin{theorem}\label{poidsplume}
Let $n\geq 2$ and $d\geq 2$ be integers.
For any prime power $q > q_0(d)$,
for any polynomial function $g$ of degree $\leq d$ which is not the product
of affine factors and for any polynomial function $f$ of degree $d$ which is 
the product of $d$ affine factors $l_i(x)+a_i$ pairwise non-proportional the following holds:
\begin{equation}
N_a(f) > N_a(g).
\end{equation}
\end{theorem}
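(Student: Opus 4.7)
The plan is to reduce the statement directly to Propositions \ref{compare1} and \ref{compare2}, which already contain the analytic content. First I would verify that $f$ belongs to ${\mathcal L\mathcal W}(q,d,n)$: pairwise non-proportionality of the affine polynomials $l_i(x)+a_i$ means the $d$ hyperplanes $\{l_i(x)+a_i = 0\}$ are distinct, and their union is precisely $Z_a(f)$. Moreover, as $f$ is a product of $d$ affine functions with $d < q$ (recall $q > q_0(d) \gg d$), $f$ is already reduced. So $f \in {\mathcal L\mathcal W}(q,d,n)$ as required for the propositions.

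Next I would examine the ${\mathbb F}_q$-irreducible factorization of $g$. Since $g$ is not a product of affine factors, at least one of its ${\mathbb F}_q$-irreducible factors must have degree $\geq 2$; pick such a factor $g_1$ of degree $d' \geq 2$ and write $g = g_1 g_2$. The factor $g_1$ is either absolutely irreducible over $\overline{{\mathbb F}_q}$ or irreducible but not absolutely irreducible, and I would split into these two cases.

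If $g_1$ is absolutely irreducible, then Proposition \ref{compare2} applies directly under our hypothesis $q > q_0(d)$ and yields $N_a(f) > N_a(g)$. If instead $g_1$ is irreducible but not absolutely irreducible, Proposition \ref{compare1} applies, provided one knows that $q > q_0(d)$ implies $q \geq d(d-1)/2$. This follows immediately from the defining equation of $x_0 := \sqrt{q_0(d)}$, namely $x_0^2 - A(d)x_0 - C(d) = 0$, which gives
$$q_0(d) = A(d) x_0 + C(d) \geq C(d) = B(d) + \tfrac{d(d-1)}{2} \geq \tfrac{d(d-1)}{2}.$$
Hence in this case as well we conclude $N_a(f) > N_a(g)$.

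The main (very minor) obstacle is purely cosmetic: verifying that the threshold $q_0(d)$ introduced for the absolutely irreducible case automatically dominates the threshold $d(d-1)/2$ of the non-absolutely irreducible case, so that a single global hypothesis $q > q_0(d)$ suffices to unify both branches. All the substantive work already lives inside Lemmas \ref{mini}, \ref{mlem}, \ref{irrbutnotabs}, \ref{likeweil}, \ref{absirr} and the two comparison propositions.
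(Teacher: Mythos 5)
Your proposal is correct and takes essentially the same route as the paper, whose proof consists precisely of noting that $\frac{d(d-1)}{2} < q_0(d)$ and then invoking Propositions \ref{compare1} and \ref{compare2}. Your extra verifications (that $f$ lies in ${\mathcal L}{\mathcal W}(q,d,n)$, that a non-product-of-affine-factors $g$ has an irreducible factor $g_1$ of degree $\geq 2$ split into the absolutely irreducible and non-absolutely irreducible cases, and that $q_0(d) \geq C(d) \geq \frac{d(d-1)}{2}$) just make explicit what the paper leaves implicit.
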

\begin{proof}
Note that 
$$\frac{d(d-1)}{2} < q_0.$$
Then the result is a consequence of Proposition \ref{compare1} and Proposition \ref{compare2}.
 \end{proof}

\begin{remark}
Theorem \ref{poidsplume} can be also expressed in term of weights of codewords.
If $q > q_0(d)$ then any word in ${\mathcal L\mathcal W}(q,d,n)$ has a weight which is 
strictly lower than any word which is not product of degree one factors.
\end{remark}

\begin{remark}
Let us give as examples of codewords in ${\mathcal L\mathcal W}(q,d,n)$ the codewords
associated to hyperplane arrangements ${\mathcal L}$ defined in \cite[Section 2]{roll2}
in the following way.
Let $d=d_1+d_2+\cdots+d_k$ where
\begin{equation}\label{block}
\left \{ 
\begin{array}{l}
 1 \leq d_1 \leq d_2 \cdots \leq d_k \leq q-1,\\
1 \leq k \leq n.
\end{array}
\right .
\end{equation}
Let us denote by $f_1,f_2,\cdots,f_k$ $k$ linearly independent linear forms on ${\mathbb F}_q^n$ and let us
consider the following hyperplane arrangement: for each $f_i$ we have $d_i$ distinct
parallel hyperplanes defined by 
$$f_i(x)=u_{i,j} \hbox{ with } 1 \leq j \leq d_i.$$
This arrangement of $d$ hyperplanes consists of $k$ blocks of parallel hyperplanes,
the $k$ directions of the blocks being linearly independent.
The corresponding codeword 
$$f(x)=\prod_{i=1}^k \prod_{j=1}^{d_i} (f_i(x)-u_{i,j})$$
is in ${\mathcal L\mathcal W}(q,d,n)$ and has the following number of zeros (see \cite[Theorem 2.1]{roll2}):
$$N_a(f)=q^n-q^{n-k}\prod_{i=1}^k (q-d_i).$$
From the point of view of weight distribution, there is a lot of different values
$W_a(f)$ for different $f$ in this class. For example with $k=2$, all
the different pairs $(d_1,d2)$ with $d_1+d_2=d$ and $d_1\leq d_2$ give different $W_a(f)$.
\end{remark}

\subsection{Low weight codewords in the general case}\label{lowgene}
From \cite{ledu} all the next-to-minimal words are known. So the main interest of the following theorem
is to give an estimate of the distance from some type of codewords to the next-to-minimal ones.
\begin{theorem}\label{nai}
If $f \in {\mathcal RP}(q,n,d)$ is an irreducible polynomial but not absolutely 
irreducible, in $n$ variables over ${\mathbb{F}}_q$,
of degree $d>1$ then the weight $W_a(f)$ of the corresponding codeword in ${\rm RM}_q(n,d)$ is
such that $W_a(f)>W_a^{(2)}(q,n,d)$. Moreover in most case we can determine a strictly
positive lower bound for $W_a(f)-W_a^{(2)}(q,n,d)$ (see the proof for the exact values).
\end{theorem}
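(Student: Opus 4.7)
Since $W_a(f) = q^n - N_a(f)$, the assertion $W_a(f) > W_a^{(2)}(q,n,d)$ is equivalent to $N_a(f) < q^n - W_a^{(2)}(q,n,d)$. My strategy is to upper bound $N_a(f)$ using Lemma~\ref{mlem} and then to verify the resulting inequality subcase by subcase against the explicit tabulation of $W_a^{(2)}$ in Remark~\ref{secondab}, recording the margin $W_a(f) - W_a^{(2)}(q,n,d)$ in each case to obtain the announced strictly positive lower bound.

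I would first dispatch the degenerate case $n = 1$: the reduction constraint forces $d \leq q - 2$, and any irreducible polynomial of degree $d \geq 2$ in one variable has no root in $\mathbb{F}_q$, so $N_a(f) = 0$, while $q - W_a^{(2)}(q,1,d) = d - 1 \geq 1$, giving margin exactly $d-1$.

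For $n \geq 2$, write $d = a(q-1) + b$ with $0 \leq b < q-1$ and let $u \geq 2$ denote the smallest prime factor of $d$. A direct check yields $\lfloor d/(2(q-1)) \rfloor = \lfloor a/2 \rfloor$, so Lemma~\ref{mlem} applied with $u = 2$ already gives
$$W_a(f) > 2\, q^{\,n - \lfloor a/2 \rfloor - 1},$$
and when $a = 0$ the sharper bound $W_a(f) > (q - d/2)\, q^{n-1}$ is also at hand. The proof then becomes a comparison case by case against Remark~\ref{secondab}. In the single subcase II.B.2.a ($a = 0$, $2 \leq d < q-1$, which forces $q \geq 4$) the generic bound is too crude and one must use the $a = 0$ refinement; this yields the margin $q^{n-2}\bigl((d/2)q - (d-1)\bigr) \geq q^{n-2}(d+1) > 0$. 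In every other subcase the $u = 2$ bound directly dominates $W_a^{(2)}$: for instance in II.B.2.c.i ($b = 0$) the margin is at least $2 q^{n-a-1}\bigl(q^{\lceil a/2 \rceil} - (q-1)\bigr) \geq 2q^{n-a-1}$, in II.B.2.c.iii ($2 \leq b < q-1$) the elementary inequality $2q^{\lceil a/2 \rceil + 1} > (q-1)^2$ suffices, and the remaining subcases (including $q = 2$ and $q = 3$) reduce to similar one-line comparisons of exponents of $q$.

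The main obstacle is sheer bookkeeping: Remark~\ref{secondab} subdivides into roughly a dozen subcases, and in each one the exponent $\lceil a/2 \rceil$ coming from Lemma~\ref{mlem} must be matched against the exponent of $q$ appearing in the corresponding formula for $W_a^{(2)}$. No configuration actually forces one to use a prime factor of $d$ strictly greater than $2$, but one may sharpen the announced margin in particular small-parameter configurations (e.g.\ $q = 3$, $d = 3$, where the true value $u = 3$ yields a strictly better constant) by substituting the true $u$ into Lemma~\ref{mlem}.
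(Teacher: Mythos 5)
Your proposal is correct and follows essentially the same route as the paper's own proof: the key bound is Lemma \ref{mlem} applied with $u=2$ (together with the $a=0$ refinement, needed exactly in the subcase $2\le d<q-1$), followed by a case-by-case comparison against the second-weight table of Remark \ref{secondab}. Your claim that $u=2$ suffices even where the paper switches to $u=3$ (the case $q=3$, $b=1$, where $d=2a+1$) does check out, since there $2\cdot 3^{\,n-\lfloor a/2\rfloor-1}\ge 18\cdot 3^{\,n-a-2}>8\cdot 3^{\,n-a-2}=W_a^{(2)}(3,n,d)$; this only alters the explicit margins, not the method.
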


\begin{proof}
 By Lemma  \ref{mlem} the weight $W_a(f)$ of the codeword associated to $f$ is such that
\begin{equation}\label{wei}
W_a(f)> 2q^{n-\left\lfloor{\frac{d}{u(q-1)}}\right\rfloor-1}.
\end{equation}
Moreover when $a=0$ the following holds:
\begin{equation}\label{wei0}
W_a(f)>q^n - \frac{d}{u}q^{n-1}.
\end{equation}
In general we shall applied this result with $u=2$ unless we have more
information on $d$ and if we need a more accurate inequality.
In the following we compare for any case $W_a(f)$ to $W_a^{(2)}(q,n,d)$
and we prove that $W_a(f) > W_a^{(2)}(q,n,d)$ and mainly we compute a lower bound 
for $W_a(f) - W_a^{(2)}(q,n,d)$. This lower bound will be useful later.

\medskip

For $n=1$ the result is trivial ($f$ does not have any zero).
We suppose now that $n\geq 2$. Subsequently $a_2$ is defined by:
$$a_2=\left\lfloor{\frac{d}{u(q-1)}}\right\rfloor,$$
with $u=2$
unless we specify another value.

\begin{enumerate}
 \item {The case $q=2$.}
\begin{itemize}
 \item $2 \leq d <n-1$. We know that $W_a^{(2)}(q,n,d)=3\times 2^{n-d-1}$.
As $d \geq 2$, we have $a_2 =\left\lfloor\frac{d}{2(q-1)} \right\rfloor\geq 1$.
If $d$ is even then $2a_2=d$ and the following holds:
$$
W_a^{(2)}(q,n,d)=3\times 2^{n-2a_2-1} \leq 3\times 2^{n-a_2-2}\\
 \leq \frac{3}{4}\times 2^{n-a_2} <\frac{3}{4}W_a(f).
$$ 
If $d$ is odd, then $a_2=\frac{d-1}{2}$ and $d=2a_2+1$.
It follows that $W_a(f)>4\times 2^{n-a_2-2}>3\times 2^{n-2a_2-2}=W_a^{(2)}(q,n,d)$.
 \item $d=n-1$. Then $W_a^{(2)}(q,n,d)=4$. As $d\geq 2$ we conclude that $n\geq 3$ and 
$a_2=\left\lfloor\frac{n-1}{2}\right\rfloor\leq \frac{n-1}{2}$. 
Then 
$$W_a(f) > 2^{n-a_2} \geq 2^{\frac{n+1}{2}}\geq 4=W_a^{(2)}(q,n,d).$$
\end{itemize}

\item {The case $q\geq 3$ and $2\leq d<q$.}
\begin{itemize}
 \item $2 \leq d <q-1$. Here $a=0$. Then $W_a(f)>q^n-\frac{d}{2} q^{n-1}$.
On the other hand we have $W_a^{(2)}(q,n,d)=q^n-dq^{n-1}+(d-1)q^{n-2}$.
Then
$$
W_a(f)-W_a^{(2)}(q,n,d)>\frac{d}{2}q^{n-1}-(d-1)q^{n-2},
$$
$$
W_a(f)-W_a^{(2)}(q,n,d)>q^{n-2}\left(\frac{qd}{2}-d+1\right).
$$
But $q\geq 3$ then $\frac{qd}{2} \geq \frac{3}{2}d$ and
$$
W_a(f)-W_a^{(2)}(q,n,d)>2q^{n-2}.
$$
 \item $d=q-1$. In this case $W_a^{(2)}(q,n,d)=2q^{n-1}-2q^{n-2}$
while $a_2=\left\lfloor \frac{1}{2}\right\rfloor=0$ and $W_a(f)>2q^{n-1}$.
Hence
$$W_a(f)-W_a^{(2)}(q,n,d)>2q^{n-2}.$$ 
\end{itemize}

\item {The case $q\geq 3$ and $(n-1)(q-1)<d<n(q-1)$.}\\
In this case $a_2 <\frac{n}{2}$, $W_a^{(2)}(q,n,d)=(q-b+1)$.
On the other hand, $W_a(f)>2q^{n-a_2-1}$. 
If $n=2$ then $a_2=0$ and $W_a(f)>2q >W_a^{(2)}(q,n,d)$.
If $n=3$ then $a_2=1$ and $W_a(f)>2q^{n-2}\geq 2q> W_a^{(2)}(q,n,d)$. If
$n \geq 4$ then $W_a(f)>q^{\frac{n-2}{2}}\geq 2q > W_a^{(2)}(q,n,d)$.

\item {The case $q\geq 3$ and $q\leq d \leq (n-1)(q-1)$.}\\
 $\bullet$ {$b=0$.} In this case $W_a^{(2)}(q,n,d)=2q^{n-a-1}(q-1)$ and
$a_2=\left\lfloor \frac{a}{2}\right\rfloor$. If $a$ is even then $a=2a_2\geq 1$.
Then $W_a^{(2)}(q,n,d)=2q^{n-2a_2}-2q^{n-2a_2-1}$ and $W_a(f)>2q^{n-a_2-1}$. Hence,
$$W_a(f)-W_a^{(2)}(q,n,d)>2q^{n-2a_2}\left(q^{a_2-1}-1\right)+2q^{n-2a_2-1}.$$
As $q^{a_2-1}-1\geq 0$ we conclude that 
$$W_a(f)-W_a^{(2)}(q,n,d)> 2q^{n-a-1}.$$
If $a$ is odd then $a=2a_2+1$ and $W_a^{(2)}(q,n,d)=2q^{n-2a_2-1}-2q^{n-2a_2-2}$ The following formulas hold:
$$w(f)-W_a^{(2)}(q,n,d)>2q^{n-2a_2-1}\left(q^{a_2}-1\right)+2q^{n-2a_2-2}.$$
As $q^{a_2}-1\geq 0$ we conclude that
$$w(f)-W_a^{(2)}(q,n,d)> 2q^{n-a-1}.$$
$\bullet$ {$b=1$.} 
\begin{itemize}
 \item $q=3$. In this case $d=2a+1$, and consequently the lowest prime factor of $d$
is $\geq 3$. Then we shall take $u=3$ for this case. Hence 
$a_2=\left\lfloor \frac{d}{3(q-1)} \right\rfloor=\left\lfloor \frac{d}{6} \right\rfloor<\frac{d}{6}$,
namely $a_2<\frac{a}{3}+\frac{1}{6}$. Moreover $W_a^{(2)}(q,n,d)=8\times 3^{n-a-2}$ and
$W_a(f)>2\times 3^{n-\frac{a}{3}-\frac{1}{6}-1}$. Then
$$W_a(f)-W_a^{(2)}(q,n,d)> 2\times 3^{n-a-2}\left( 3^{\frac{2a}{3}+\frac{5}{6}} -4\right)$$
and as $a\geq 1$ 
$$
W_a(f)-W_a^{(2)}(q,n,d)> 2\times 3^{n-a-2}\left( 3^{\frac{3}{2}} -4\right) > 2\times 3^{n-a-2}.
$$
 \item $q\geq 4$. We know that $W_a^{(2)}(q,n,d)=q^{n-a}$ and $W_a(f)>2 q^{n-a-1}$.
If $a_2=0$ then 
$$W_a(f)-W_a^{(2)}(q,n,d)>2q^{n-1}-q^{n-a} \geq q^{n-1}.$$
If $a=1$ then $d=q \geq 4$ and $a_2\leq \frac{q}{2(q-1} \leq \frac{2}{3} <1$. Then $a_2=0$.
Hence, if $a_2=1$ then $a \geq 2$. Then $W_a(f)>q^{n-2}$ and
$W_a^{(2)}(q,n,d)\leq q^{n-2}$. We conclude that
$$W_a(f)-W_a^{(2)}(q,n,d)>0.$$
If $a_2 \geq 2$, we know that $a_2=\left\lfloor\frac{a(q-1)+1}{2(q-1)}\right\rfloor$
and then $a_2 \leq \frac{a}{2}+\frac{1}{6}$ or $a>2a_2-\frac{1}{3}$. 
Consequently $W_a^{(2)}(q,n,d)<q^{n-2a_2+\frac{1}{3}}$ while $W_a(f)>2q^{n-a_2-1}$,
hence
$$W_a(f)-W_a^{(2)}(q,n,d)>q^{n-2a_2+\frac{1}{3}}\left(2q^{a_2-\frac{4}{3}}-1 \right)>0.$$
\end{itemize}

$\bullet$  {$2\leq b <q-1$.}
We know that $W_a^{(2)}(q,n,d)=q^{n-a-2}(q-1)(q-b+1)$. 
From the definitions we get the two following inequalities:
$$\frac{d}{q-1}-1<a\leq \frac{d}{q-1},$$
$$\frac{d}{2(q-1)}-1<a_2\leq \frac{d}{2(q-1)},$$
then
$$0 \leq a-2a_2 \leq 1.$$
If $a$ is even then $a=2a_2\geq 2$ and
$$W_a^{(2)}(q,n,d)=q^{n-2a_2-2}(q-1)(q-b+1)<q^{n-2a_2}.$$
Hence:
$$W_a(f)-W_a^{(2)}(q,n,d>2q^{n-a_2-1}-q^{n-2a_2},$$
$$W_a(f)-W_a^{(2)}(q,n,d)> q^{n-2a_2}\left(2q^{a_2-1}-1\right),$$
and as $a_2 \geq 1$ we conclude that
$$W_a(f)-W_a^{(2)}(q,n,d)>q^{n-2a_2}=q^{n-a}.$$
If $a$ is odd, $a=2a_2+1$, $a\geq 1$, $a_2\geq 0$.
Moreover 
$$W_a^{(2)}(q,n,d)=q^{n-2a_2-3}(q-1)(q-b+1)<q^{n-2a_2-1}$$
and
$$W_a(f)>2q^{n-a_2-1}.$$
Then
$$W_a(f)-W_a^{(2)}(q,n,d)>q^{n-2a_2-1}\left(2q^{a_2}-1\right),$$
and as $2q^{a_2}-1\geq 1$ we obtain
$$W_a(f)-W_a^{(2)}(q,n,d)>q^{n-2a_2-1}=q^{n-a}.$$
\end{enumerate}
 \end{proof}

From the computations done in the proof of the previous Theorem 
and examples introduced in \cite{roll2} we can
deduce the following:
\begin{theorem} Suppose that $d$ is such that $d=a(q-1)+b$
with $1 \leq a < n-1$ and $2 \leq b <q-1$ (then $q \geq 4$).
If $f \in {\mathcal RP}(q,n,d)$ is an irreducible polynomial but not absolutely 
irreducible, in $n$ variables over ${\mathbb{F}}_q$,
of degree $d>1$ then the weight $W_a(f)$ of the corresponding codeword in ${\rm RM}_q(n,d)$ is
such that $W_a(f)>W_a^{(4)}(q,n,d)$. 
\end{theorem}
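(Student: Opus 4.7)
The plan is to combine the sharp lower bound on $W_a(f)$ extracted from the proof of Theorem \ref{nai} with explicit upper bounds on $W_a^{(4)}(q,n,d)$ obtained from hyperplane-arrangement codewords. First I would invoke the sub-case $2\le b<q-1$ of the proof of Theorem \ref{nai}, whose even-$a$ and odd-$a$ sub-analyses together yield, uniformly for every $a\ge 1$, the strict inequality
$$W_a(f)-W_a^{(2)}(q,n,d)>q^{n-a}.$$
Consequently it suffices to show $W_a^{(4)}(q,n,d)\le W_a^{(2)}(q,n,d)+q^{n-a}$, i.e.\ to exhibit four distinct codeword weights in the window $[W_a^{(1)}(q,n,d),\,W_a^{(2)}(q,n,d)+q^{n-a}]$.

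For this I would use the hyperplane-arrangement examples recalled at the end of Section \ref{largeq}: a composition $d=d_1+\cdots+d_k$ with $1\le d_i\le q-1$ and $k\le n$ yields a codeword of weight $q^{n-k}\prod_i(q-d_i)$. Besides the minimum-weight arrangement $(q-1,\ldots,q-1,b)$ and the second-weight arrangement $(1,b-1,q-1,\ldots,q-1)$, I would use
$$A=(1,b,q-2,q-1,\ldots,q-1)\quad\text{and}\quad B=(q-2,b+1,q-1,\ldots,q-1),$$
where $B$ is genuinely new exactly when $b\le q-3$. A direct calculation gives $W(A)-W_a^{(2)}=(q-1)(q-b-1)q^{n-a-2}$ and $W(B)-W_a^{(2)}=\bigl(q(q-b-2)-(b-1)\bigr)q^{n-a-2}$, both strictly positive and strictly smaller than $q^{n-a}$, with $W(A)-W(B)=2b\,q^{n-a-2}>0$. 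This yields four distinct weights inside the window whenever $b\le q-3$.

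The main obstacle is the extremal case $b=q-2$, where $B$ collapses to the minimum-weight multi-set. For $q\ge 5$ I would replace $B$ by the $k=a+2$ arrangement $(2,q-3,b,q-1,\ldots,q-1)$ of weight $3(q-2)(q-b)q^{n-a-2}$, which still sits in the window and differs from $W(A)$. The residual small corner $q=4$, $b=2$ (especially together with $a=n-2$, which forbids a $k=a+3$ split) requires a different type of codeword: I would take a product $f=f_1\cdot f_2$ where $f_1$ is a hyperplane-arrangement codeword of strictly smaller degree and $f_2$ is an irreducible univariate polynomial of complementary degree $\ge 2$ having no zero on $\mathbb{F}_q^n$, so that the product codeword has weight $W_a(f_1)$. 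In parallel I would sharpen the lower bound on $W_a(f)$ by applying Lemma \ref{mlem} with $u$ equal to the smallest prime factor of $d$ instead of $u=2$; this improved bound typically dominates the few available candidate values of $W_a^{(4)}$ and closes the case.
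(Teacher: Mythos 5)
Your overall strategy is the same as the paper's: take the strict bound $W_a(f)-W_a^{(2)}(q,n,d)>q^{n-a}$ established in the sub-case $2\le b<q-1$ of the proof of Theorem \ref{nai}, and then exhibit enough distinct codeword weights below $W_a^{(2)}(q,n,d)+q^{n-a}$ to force $W_a^{(4)}(q,n,d)$ under that threshold. The paper carries out the second step by citing the configurations $T_1,T_2,T_3$ of \cite[Section 2.2]{roll2} together with \cite[Propositions 2.6, 2.8, 2.9]{roll2}, which give $W_a^{(2)}(q,n,d)=W_a(T_3)<W_a(T_1)<W_a(T_2)=W_a^{(2)}(q,n,d)+q^{n-a-2}(q-1)<W_a^{(2)}(q,n,d)+q^{n-a}$; you instead construct the intermediate arrangements explicitly. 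Your computations check out: the arrangement $A$ has $k=a+2\le n$ blocks and weight $2(q-1)(q-b)q^{n-a-2}$, the arrangement $B$ has $k=a+1$ blocks and weight $2q(q-b-1)q^{n-a-2}$, and for $2\le b\le q-3$ one indeed gets four distinct weights $W_a^{(1)}<W_a^{(2)}<W(B)<W(A)<W_a^{(2)}(q,n,d)+q^{n-a}$, while your substitute arrangement of weight $6(q-2)q^{n-a-2}$ settles $b=q-2$ when $q\ge 5$. In these cases your argument is a sound, self-contained variant of the paper's, trading the citation of \cite{roll2} for a case analysis.

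The one point where your text is not yet a proof is the corner $q=4$, $b=q-2=2$: the sentence asserting that the sharpened lower bound ``typically dominates the few available candidate values of $W_a^{(4)}$ and closes the case'' is not an argument, and the sharpening of Lemma \ref{mlem} cannot be relied on anyway, since $d=3a+2$ may be even (e.g.\ $a=2$, $d=8$), in which case the smallest prime factor is $2$ and $u=2$ gives nothing new. Also, the padding by a zero-free irreducible univariate factor is superfluous: ${\rm RM}_q(d,n)$ consists of all functions of degree $\le d$, so arrangements of fewer than $d$ hyperplanes already yield code weights. With that observation the corner closes in one line: for $q=4$, $b=2$, in units of $4^{n-a-2}$ one has $W_a^{(1)}=8$, $W_a^{(2)}=9$, $W(A)=12$, and the arrangement of $a$ full blocks of $q-1=3$ parallel hyperplanes (degree $3a\le d$, $k=a\le n$) has weight $q^{n-a}=16$; these are four distinct nonzero weights, all smaller than $25=W_a^{(2)}(q,n,d)+q^{n-a}<W_a(f)$, whence $W_a(f)>W_a^{(4)}(q,n,d)$. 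Once this is inserted in place of the vague sentence, your proof is complete.
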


\begin{proof}
Recall that to each hyperplane is associated up to a multiplicative non-zero constant
a affine polynomial. To a hyperplane configuration is associated the product
of these affine polynomials.
Let us consider $T_1$, the type 1 hyperplane configuration,
$T_2$, the type 2 hyperplane configuration 
and $T3$, the type 3 hyperplane configuration given in \cite[Section 2.2]{roll2}.
The following inequalities hold (cf. \cite[Propositions 2.6, 2.8]{roll2}):
$$N_a(T_3) > N_a(T_1) > N_a(T_2).$$
Note that $T_3$ defines codewords which have the second weight.
We have computed in the proof of the previous theorem that
$$W_a(f)-W_a^{(2)}(q,n,d) \geq q^{n-a}.$$
But by \cite[Proposition 2.9]{roll2}
$$W_a(T_2)-W_a(T_3)=W_a(T_2)- W_a^{(2)}(q,n,d) = q^{n-a-2}(q-1).$$
Then
$$W_a(f) > W_a(T_2) > W_a(T_1)> W_a(T_3)=W_a^{(2)}(q,n,d),$$
hence
$$W_a(f)> W_a^{(4)}(q,n,d).$$
 \end{proof}

\subsection{Low weight codeword for the important case $d<q$}\label{dppq}
In this case there are results on the third weight codewords given
by F. Rodier and A. Sboui in \cite{rosb}. They proved that for $q \geq 3d-6$ the three first
weights are given only by some hyperplane arrangement. Moreover they proved
that this is no longer the case for 
$$\frac{q}{2}+\frac{5}{2}\leq d <q,$$
in which case the third weight can be obtained also by some hypersurface containing an irreducible quadric.
In the following we study for $d<q$ the case of an irreducible but not absolutely irreducible factor.

\begin{theorem}
If $f \in {\mathcal RP}(q,n,d)$ is a product of two polynomials $f=g\,.\, h$
such that
\begin{enumerate}
 \item $2 \leq d'=\deg(g) \leq d=\deg(f) <q-1$;
 \item $g$ is irreducible but not absolutely irreducible;
\end{enumerate}
then $W_a(f) > W_a^{(2)}(q,n,d)$.
Moreover if $b\geq 3$ and $q \geq 2d-4$ then $W_a(f) > W_a^{(3)}(q,n,d)$ else
if $b\geq 3$ and $q \geq 2d-3$ then $W_a(f) > W_a^{(4)}(q,n,d)$.
\end{theorem}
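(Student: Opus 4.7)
The plan is to lower-bound $W_a(f)$ by treating $g$ and $h$ separately, and then compare to explicit upper bounds on $W_a^{(2)}, W_a^{(3)}, W_a^{(4)}$ coming from known hyperplane arrangements. Since $d < q - 1$ forces $a = 0$ and $b = d$ in the Euclidean division, Lemma \ref{mlem} applies to $g$ in its refined ``$a=0$'' form: taking $u = 2$ (always permissible as the smallest prime factor of $d' \geq 2$ is at least $2$), it yields $N_a(g) < \tfrac{d'}{2}\, q^{n-1}$. For the cofactor $h$ of degree $d - d'$, the Kasami--Lin--Peterson bound (as used in the proof of Lemma \ref{irrbutnotabs}) gives $N_a(h) \leq (d - d')\, q^{n-1}$. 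Combining through $Z_a(f) = Z_a(g) \cup Z_a(h)$ and exploiting $d' \geq 2$ yields
\[
N_a(f) < \left(d - \tfrac{d'}{2}\right) q^{n-1} \leq (d - 1)\, q^{n-1}, \qquad W_a(f) > q^n - (d - 1)\, q^{n-1},
\]
the central strict lower bound on the weight.

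From Remark \ref{secondab}, in the regime $a = 0$, $2 \leq d < q - 1$, one has $W_a^{(2)}(q,n,d) = q^n - d\, q^{n-1} + (d - 1)\, q^{n-2}$. Direct subtraction gives
\[
W_a(f) - W_a^{(2)}(q,n,d) > q^{n-1} - (d - 1)\, q^{n-2} = q^{n-2}(q - d + 1),
\]
which is strictly positive because $d < q - 1$ forces $q - d + 1 \geq 3$; this settles the first assertion. For the stronger comparisons with $W_a^{(3)}$ and $W_a^{(4)}$, the plan is to exhibit explicit hyperplane arrangements whose weights upper-bound $W_a^{(3)}$ and $W_a^{(4)}$. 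For $d \geq 4$, the two natural candidates (drawn from \cite{rosb} and the types $T_1, T_2$ of \cite[Sect.~2]{roll2}) are a pencil of $d - 1$ hyperplanes augmented by one hyperplane parallel to a pencil member, and a ``$2$ parallel $+ (d-2)$ parallel in a transverse direction'' arrangement; a short inclusion--exclusion computation gives their weights as
\[
w_3 = q^n - d\, q^{n-1} + 2(d - 2)\, q^{n-2} - (d - 3)\, q^{n-3}, \quad w_4 = q^n - d\, q^{n-1} + 2(d - 2)\, q^{n-2}.
\]
Subtracting from the lower bound for $W_a(f)$ yields the factorizations
\[
W_a(f) - w_3 > q^{n-3}\bigl(q^2 - 2(d-2)\, q + (d - 3)\bigr), \quad W_a(f) - w_4 > q^{n-2}\bigl(q - (2d - 4)\bigr),
\]
the first of which vanishes at $q = 2d - 4$ with residual term $d - 3 > 0$ for $d \geq 4$, while the second demands the strict inequality $q > 2d - 4$, i.e., $q \geq 2d - 3$. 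These thresholds match exactly the hypotheses of the theorem, and the boundary cases are rescued by the strictness of the bound $W_a(f) > q^n - (d-1)q^{n-1}$. The residual case $d = 3$ (where the ``$(d-1)$-pencil plus one parallel'' configuration collapses to an ordinary pencil) is handled by replacing $w_3$ with the weight of three hyperplanes in general position, namely $q^n - 3q^{n-1} + 3q^{n-2} - q^{n-3}$; the automatic lower bound $q \geq d + 2 = 5$ coming from $d < q - 1$ then amply suffices to conclude $q^2 - 3q + 1 > 0$.

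The main technical obstacle is verifying that the two configurations above genuinely serve as upper bounds for $W_a^{(3)}$ and $W_a^{(4)}$ in this range, i.e.\ that there is no intermediate arrangement (or non-hyperplane codeword, in the spirit of the quadric-based examples of Rodier--Sboui for $d \geq q/2 + 5/2$) squeezed below them; this is supplied by \cite{rosb} and by the enumeration of the configurations $T_1, T_2, T_3$ in \cite{roll2}. The hypothesis $b \geq 3$ (equivalently $d \geq 3$) is precisely what ensures that such intermediate arrangements, strictly separating the second weight from higher weights, exist at all.
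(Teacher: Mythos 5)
Your first half is correct and coincides with the paper's own argument: Lemma \ref{irrbutnotabs} gives $N_a(f)<(d-1)q^{n-1}$, and since $d<q-1$ forces $a=0$, $b=d$, comparison with $W_a^{(2)}(q,n,d)=q^n-dq^{n-1}+(d-1)q^{n-2}$ yields $W_a(f)>W_a^{(2)}(q,n,d)$. The gap is in the second half. The two arrangements you exhibit do not have the weights you assign them: a direct count shows that a pencil of $d-1$ hyperplanes (common codimension-two axis) augmented by one hyperplane parallel to a pencil member covers exactly $dq^{n-1}-(2d-4)q^{n-2}$ points, so its weight is $q^n-dq^{n-1}+2(d-2)q^{n-2}=w_4$, the same as that of the block arrangement $(d-2)+2$ in two independent directions. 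Thus both of your configurations realize the single value $w_4$, and no codeword of weight $w_3=q^n-dq^{n-1}+2(d-2)q^{n-2}-(d-3)q^{n-3}$ is actually produced. You have therefore exhibited only one codeword weight strictly between $W_a^{(2)}(q,n,d)$ and your lower bound for $W_a(f)$, which gives $W_a(f)>W_a^{(3)}(q,n,d)$ (for $d\geq 4$, $q\geq 2d-4$) but not $W_a(f)>W_a^{(4)}(q,n,d)$: for the latter you need two \emph{distinct} realized weights $w'<w''$ with $W_a^{(2)}(q,n,d)<w'$ and $w''<W_a(f)$. The paper supplies the missing second value with the arrangement of blocks $(d-2,1,1)$ in three linearly independent directions, whose weight $q^n-dq^{n-1}+(2d-3)q^{n-2}-(d-2)q^{n-3}$ is strictly larger than $w_4$ (because $q>d$) and still below the bound $q^n-(d-1)q^{n-1}<W_a(f)$ when $q\geq 2d-3$; this is exactly where the second threshold comes from.

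Two further remarks. First, your closing worry is the wrong requirement: to use a configuration as an upper bound for $W_a^{(3)}$ or $W_a^{(4)}$ you need not rule out codewords ``squeezed below'' it (such codewords would only decrease $W_a^{(3)}$ and $W_a^{(4)}$, helping you); what is needed, and what is missing, is a second distinct realized weight above $W_a^{(2)}(q,n,d)$. Second, the fact that your computation made both comparisons go through already at $q\geq 2d-4$, i.e.\ apparently proving $W_a(f)>W_a^{(4)}(q,n,d)$ under the weaker hypothesis, should have been a warning sign; with the corrected weight of the pencil-based arrangement the distinction between $q\geq 2d-4$ and $q\geq 2d-3$ reappears exactly as in the paper. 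Note finally that for $d=b=3$ the arrangement $(d-2)+2$ has weight equal to $W_a^{(2)}(q,n,d)$, so the chain degenerates there; your separate treatment of $d=3$ salvages the $W_a^{(3)}$ statement but leaves the $W_a^{(4)}$ claim in that case unaddressed.
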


\begin{proof}
We know by Lemma \ref{irrbutnotabs} that
$$N_a(f) <(d-1)q^{n-1}.$$
On the other hand,
$$W_a^{(2)}(q,n,d)=q^n -dq^{n-1}+(d-1)q^{n-2}.$$
Then
$$W_a(f)-W_a^{(2)}(q,n,d)>q^{n-1}-(d-1)q^{n-2}>0.$$

Consider now the two following hyperplane configurations $S$ and $T$.
The configuration $S$ is given by two blocks of parallel hyperplanes 
directed by two linearly independent linear forms. The first block contains
$b-2$ parallel hyperplanes and the second block contains $2$ parallel hyperplanes.
The number of points of this configuration is (using for example \cite[Theorem 2.1]{roll2}):
$$N_a(S)=q^n-q^{n-2}(q-d+2)(q-2)=dq^{n-1}-(2d-4)q^{n-2}<q^n-W_a^{(2)}(q,n,d).$$
The configuration $T$ is given by three blocks of parallel hyperplanes
directed by three linearly independent linear forms. The first block contains
$b-2$ parallel hyperplanes, the second block and the third blocks contain a unique hyperplane.
The number of points of this configuration is
$$N_a(T)=dq^{n-1}-(2d-3)q^{n-2}q^{n-3}<N_a(S).$$
If $q \geq 2d-4$, we have $W_a(f)>N_a(S)$. Consequently
$$W_a^{(2)}(q,n,d)<W_a(S)<W_a(f),$$
and then $W_a(f)>W_a^{(3)}(q,n,d)$.
Now if $q \geq 2d-3$, $W_a(f)>N_a(T)$ and consequently 
$$W_a^{(2)}(q,n,d)<W_a(S)<W_a(T)<W_a(f).$$
Then $W_a(f)>W_a^{(4)}(q,n,d)$.
 \end{proof}

\section{The second weight in the projective case}\label{secondproj}

In this section we tackle the unsolved problem of finding the second weight $W_h^{(2)}(q,n,d)$ for
PGRM codes.

\begin{lemma}\label{lemm2}
Let $f$ be a homogeneous polynomial in $n+1$ variables
of total degree $d$, with coefficients in ${\mathbb F}_q$,  
which does not vanish on the whole projective space ${\mathbb P}^{n}(q)$.
If there exists a projective hyperplane $H$ such that 
the affine hypersurface $\left({\mathbb P}^n(q) \setminus H\right)\cap Z_h(f)$
contains an affine hyperplane of the affine space ${\mathbb A}^{n}(q)={\mathbb P}^{n}(q) \setminus H$
then the projective hypersurface $Z_h(f)$ contains a projective hyperplane. Moreover,
if the affine hypersurface $\left({\mathbb P}^n(q) \setminus H\right)\cap Z_h(f)$ is an affine arrangement
of hyperplanes then $Z_h(f)$ is a projective arrangement of hyperplanes.
In particular if $f$ restricted to the affine space 
${\mathbb A}^n(q)$ defines a minimal word or a next-to-minimal word then $Z_h(f)$
is a projective arrangement of hyperplanes.
\end{lemma}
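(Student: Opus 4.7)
The plan is to reduce, via a projective change of coordinates, to $H=\{X_0=0\}$ and $A=\{X_1=0\}\cap\mathbb{A}^n(q)$, so that the projective closure of $A$ is $\bar A=\{X_1=0\}\subset\mathbb{P}^n(q)$. Writing $f=X_1\,g+F$ with $F(X_0,X_2,\ldots,X_n)=f(X_0,0,X_2,\ldots,X_n)$ homogeneous of degree $d$, the hypothesis becomes that $F(1,X_2,\ldots,X_n)$ vanishes as a function on $\mathbb{F}_q^{n-1}$; equivalently, $F$ already vanishes on the affine chart of $\bar A$. The goal is to upgrade this to $F$ vanishing on all of $\bar A\cong\mathbb{P}^{n-1}(q)$, that is, $\bar A\subset Z_h(f)$.

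For this upgrade I would view $F$ as a codeword of the smaller-dimensional code $\mathrm{PGRM}_q(n-1,d)$. Since it already vanishes at the $q^{n-1}$ affine points of $A$, its weight is at most $\frac{q^n-1}{q-1}-q^{n-1}=\frac{q^{n-1}-1}{q-1}$. The case $X_0\mid f$ is trivial: then $H$ itself is a projective hyperplane contained in $Z_h(f)$. In the remaining case a direct case analysis using the explicit formula $W_h^{(1)}(q,n-1,d)=(q-b)q^{n-a-2}$ of Theorem~\ref{proj1} (with $d-1=a(q-1)+b$) shows that the minimum distance of $\mathrm{PGRM}_q(n-1,d)$ strictly exceeds the upper bound $\frac{q^{n-1}-1}{q-1}$, forcing $F$ to be the zero codeword on $\bar A$, and hence $\bar A\subset Z_h(f)$.

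The moreover statement is more direct. If the affine hypersurface is the union of $d$ affine hyperplanes, $\tilde f$ factors as $\tilde L_1\cdots\tilde L_d$ into affine linear forms; homogenizing each $\tilde L_i$ to a projective linear form $L_i$ yields a homogeneous polynomial $L_1\cdots L_d$ of degree $d$ whose dehomogenization equals $\tilde f$. In the case $X_0\mid f$ one iterates on $d$ after stripping a factor of $X_0$; otherwise $\tilde f$ has degree exactly $d$, its degree-$d$ homogenization is unique, and $f=L_1\cdots L_d$, so $Z_h(f)=\bigcup_{i=1}^{d}Z_h(L_i)$ is a projective arrangement. The final in-particular assertion then follows by combining this with the explicit descriptions of minimal and next-to-minimal affine codewords: Theorem~\ref{Del} presents minimal codewords in the form (\ref{pol}), which expands into $d$ affine linear factors via $1-Y^{q-1}=(-1)^q\prod_{\zeta\in\mathbb{F}_q^*}(Y-\zeta)$, and \cite{ledu} supplies the analogous factorization in the next-to-minimal case.

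The main obstacle is the upgrade from $F(1,\cdot)\equiv 0$ as a function on $\mathbb{F}_q^{n-1}$ to $F\equiv 0$ as a function on all of $\bar A$, including the hyperplane at infinity $\bar A\cap H$. This is not a formal polynomial manipulation, since $F$ itself need not be the zero polynomial when reduced modulo $X_i^q-X_i$; the codimension-one bound supplied by the PGRM minimum distance on $\bar A$ is doing the real work, and the stronger structural hypothesis in the moreover statement (the affine part being an arrangement) is precisely what bypasses this subtlety in the applications to minimal and next-to-minimal codewords.
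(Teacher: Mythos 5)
Your argument for the main claim has a genuine gap. The crucial quantitative assertion --- that the minimum distance of ${\rm PGRM}_q(n-1,d)$ strictly exceeds $\frac{q^{n-1}-1}{q-1}$ --- is only true when $d\le q-1$, i.e.\ when $a=0$ in $d-1=a(q-1)+b$. As soon as $d\ge q$ one has $a\ge 1$, so $W_h^{(1)}(q,n-1,d)=(q-b)q^{n-a-2}\le q^{n-2}<\frac{q^{n-1}-1}{q-1}$ (and the minimum distance is even $1$ when $d>(n-1)(q-1)$, by Lemma \ref{lemm1}); hence the codeword $F|_{\bar A}$ is not forced to vanish at the points at infinity of $\bar A$. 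Worse, the strengthened statement you are trying to prove, namely $\bar A\subset Z_h(f)$, is actually false in that range: take $f=X_2^q-X_2X_0^{q-1}$ (degree $q$), which vanishes at every point of the affine chart $X_0\neq 0$, in particular on the affine hyperplane $X_1=0$, but does not vanish at the points $(0:0:x_2:\cdots:x_n)$ with $x_2\neq 0$ of the projective closure of that hyperplane; the lemma's conclusion still holds, but only because $Z_h(f)$ contains a \emph{different} projective hyperplane ($X_2=0$). Since the lemma carries no restriction on $d$ and is used in Theorem \ref{proj2} for the whole range $2\le d\le n(q-1)$ (and for $q=2$ every $d\ge 2$ already has $a\ge 1$), this is not a boundary issue; and splitting off the case $X_0\mid f$ does not help, because your minimum-distance claim is a statement about the code, independent of $f$.

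The paper avoids this entirely by never trying to show that the closure of the given affine hyperplane lies in $Z_h(f)$: it works at the level of the polynomial factorization. If $f(1,X_1,\ldots,X_n)=(l(X)-\alpha)f_1(X)$, then homogenizing gives $f=X_0^{k}\,(l-\alpha X_0)\,\widetilde{f_1}$ for some $k\ge 0$, so $Z_h(f)$ contains the projective hyperplane $l=\alpha X_0$ --- which in general is not the closure of the affine one when extra factors such as $X_0^k$ are present; the arrangement statement follows by homogenizing each linear factor. This is exactly what you do in your ``moreover'' paragraph, and that part is sound (with the same implicit convention as in the paper that the arrangement hypothesis is used through the known linear factorizations of minimal and next-to-minimal codewords from \cite{degoma} and \cite{ledu}, which you do cite). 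So the portion of your proposal that coincides with the paper's proof is correct, while the genuinely new portion (the PGRM minimum-distance upgrade) fails for $d\ge q$; the fix is simply to apply the factor-and-homogenize argument to a single linear factor for the first claim as well.
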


\begin{proof}
Suppose that 
$$f(1,X_1,\cdots ,X_n)= \left(\strut l(X_1, \cdots X_n) - \alpha\right)f_1(X_1,\cdots,X_n)$$
where $l(X_1, \cdots X_n)$ is linear,
then 
$$
f(X_0,X_1, \cdots, X_n)=
\left (\strut l(X_1, \cdots, X_n) -\alpha X_0\right)\widetilde{f_1}(X_0,X_1,\cdots,X_n)
$$
where $\widetilde{f_1}(X_0,X_1,\cdots,X_n)$ is the homogeneous polynomial 
obtained by homogenization of $f_1(X_1,\cdots,X_n)$. We conclude that $f$
defines a hypersurface containing a hyperplane.
 \end{proof}

\begin{lemma}\label{ineq}
 For $n \geq 2$ and $d \geq 2$ the following holds
$$W_h^{(1)}(q,n-1,d)+W_a^{(2)}(q,n,d) \leq W_a^{(2)}(q,n,d-1).$$
\end{lemma}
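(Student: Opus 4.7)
The plan is a direct verification using the explicit formulas for $W_a^{(2)}$ compiled in Remark \ref{secondab} together with the minimum distance formulas for GRM and PGRM codes recalled earlier. A useful preliminary observation is that, whenever $d \leq (n-1)(q-1)$, one has the identity $W_h^{(1)}(q,n-1,d) = W_a^{(1)}(q,n-1,d-1)$: indeed, writing $d-1 = a'(q-1)+b'$ with $0 \leq b' < q-1$, both quantities evaluate to $(q-b')q^{n-a'-2}$. Writing $d = a(q-1)+b$ with $0\leq b < q-1$, the Euclidean data for $d-1$ is $(a', b') = (a, b-1)$ when $b \geq 1$ and $(a', b') = (a-1, q-2)$ when $b = 0$, which determines the formulas to plug in.

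To illustrate the argument in its generic instance, assume $q \geq 5$ and $q-1 \leq d \leq (n-1)(q-1)$ with $3 \leq b \leq q-2$, so that both $d$ and $d-1$ fall into sub-case II.B.2.c.iii of Remark \ref{secondab}. Then
\begin{align*}
W_h^{(1)}(q,n-1,d) &= (q-b+1)\,q^{n-a-2},\\
W_a^{(2)}(q,n,d) &= q^{n-a-2}(q-1)(q-b+1),\\
W_a^{(2)}(q,n,d-1) &= q^{n-a-2}(q-1)(q-b+2),
\end{align*}
and the desired inequality reduces after cancelling $q^{n-a-2}$ to $q(q-b+1) \leq (q-1)(q-b+2)$, i.e.\ to $b \geq 2$, which holds. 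The other cases are treated in the same spirit, substituting the appropriate formulas and reducing to an elementary inequality in $q$, $a$, $b$. Particular care is needed at sub-case boundaries (notably $b=2$ and $b=1$), in the transition $b = 0 \Rightarrow (a_{d-1}, b_{d-1}) = (a-1, q-2)$, in the low-characteristic regimes $q \in \{2, 3\}$, and in the ranges $d < q-1$ and $(n-1)(q-1) < d < n(q-1)$ (where $W_h^{(1)}(q,n-1,d)$ degenerates to $1$); a handful of edge cases (for instance $q = 2$ with $d = n-1 \geq 3$, or $q = 3$ with $d = 2n-1$) saturate the bound, which shows that the constants in Remark \ref{secondab} are sharp for this lemma.

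The main obstacle is therefore not mathematical but organisational: orchestrating the many parameter regimes of Remark \ref{secondab} and their shifts under $d \mapsto d-1$. A workable strategy is to fix the decomposition $d = a(q-1)+b$ once for all, split on the branches $b \geq 2$, $b = 1$, $b = 0$, and within each branch further split according to whether $q \in \{2,3\}$ or $q \geq 4$ and whether $d$ lies in the interior $q-1 \leq d \leq (n-1)(q-1)$ or on the boundary strata $d < q-1$ or $d > (n-1)(q-1)$. In each resulting cell the inequality reduces to an easy polynomial bound.
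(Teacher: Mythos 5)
Your strategy is exactly the paper's: express $W_h^{(1)}(q,n-1,d)$ through the Euclidean division of $d-1$ by $q-1$, substitute the explicit second-weight formulas of Remark \ref{secondab} for both $d$ and $d-1$, and verify the inequality cell by cell; the paper merely organizes the same computation through the coefficients $\gamma_d$, $\gamma_{d-1}$ and a case split on $b_{d-1}$ rather than on $b_d$. The one cell you actually compute (both $d$ and $d-1$ in case II.B.2.c.iii, reducing to $b\geq 2$) is correct, and your list of delicate spots matches the paper's; in fact your explicit observation that $W_h^{(1)}(q,n-1,d)$ degenerates to $1$ when $(n-1)(q-1)<d$ is treated more carefully than in the paper's own proof, which formally keeps the expression $(q-b_{d-1})q^{n-a_{d-1}-2}$ even when $a_{d-1}=n-1$.

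The gap is one of execution, not of idea: as written you have a plan plus a single verified case, whereas for this lemma the exhaustive verification \emph{is} the entire content. The paper's proof runs through roughly a dozen sub-cases, and several of them are tight, with $\Delta=0$ (for instance $q=2$, $d=n-1$; $q=3$, $d=2n-1$; and $d=n(q-1)$), so in those cells there is no slack and ``treated in the same spirit'' cannot be taken on faith; likewise the regimes $b\in\{0,1\}$ and $q\in\{2,3\}$ involve different values of $\gamma$ and genuinely different arithmetic (e.g.\ the transition $b_d=0$, where $a_{d-1}=a_d-1$ and $b_{d-1}=q-2$). Until those cases — at the very least the saturating ones you yourself identify — are written out, the argument is incomplete; completing it is routine but amounts to reproducing the paper's case enumeration.
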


\begin{proof}
Let us introduce the following notations:
$$d-1=a_{d-1}(q-1)+b_{d-1} \hbox{ with } 0 \leq b_{d-1} \leq q-2,$$
$$d= a_d(q-1)+b_d \hbox{ with } 0 \leq b_{d} \leq q-2.$$
The values $\gamma_{d-1}$ and $\gamma_{d}$
are the the coefficient $\gamma$ which occurs in Remark \ref{secondab}, with respect to $d-1$ and $d$.
Then we have
$$W_h^{(1)}(q,n-1,d)=(q-b_{d-1}) q^{n-a_{d-1}-2},$$
$$W_a^{(2)}(q,n,d)=(q-b_d)q^{n-a_d-1} + \gamma_{d}q^{n-a_d-2},$$
$$W_a^{(2)}(q,n,d-1)=(q-b_{d-1})q^{n-a_{d-1}-1} + \gamma_{d-1}q^{n-a_{d-1}-2}.$$
Denote by $\Delta$ the difference
$$
\Delta= W_a^{(2)}(q,n,d-1)
-\left(\strut(W_h^{(1)}(q,n-1,d)+W_a^{(2)}(q,n,d)\right)
$$
\begin{itemize}
 \item If $0 \leq b_{d-1} \leq q-3$ then $q >2$, $b_d=b_{d-1}+1$
and $a_d=a_{d-1}$. In this case let us denote by $a$ the common value of $a_d$ and $a_{d-1}$. Hence
$$\Delta= q^{n-a-2} \left(\strut b_{d-1}+\gamma_{d-1}-\gamma_d\right).$$
  \begin{itemize}
    \item If $a=n-1$ and $b_{d-1}=0$ then $\gamma_{d-1}=q(q-2)$, $\gamma_{d}=q$ and $\Delta=q^{n-a-1}(q-3)$.
    \item If $a=n-1$ and $b_{d-1}>0$ then $\gamma_{d-1}=\gamma_{d}=q$ and $\Delta=q^{n-a-2}b_{d-1}$.
    \item If $a<n-1$, $b_{d-1}=0$ and $q=3$ then $\gamma_{d-1}=3$, $\gamma_d=2$ and  $\Delta=q^{n-a-1}$.
    \item If $a<n-1$, $b_{d-1}=0$ and $q\geq 4$ then $\gamma_{d-1}=q(q-2)$, $\gamma_d=q$ and $\Delta=q^{n-a-1}(q-3)$.
    \item If $a<n-1$, $b_{d-1}=1$, and $q=3$ then $\gamma_{d-1}=2$, $\gamma_{d}=1$ and $\Delta=2q^{n-a-2}$.
    \item If $a<n-1$, $b_{d-1}=1$, and $q\geq 4$ then $\gamma_{d-1}=q$, $\gamma_{d}=1$ and $\Delta=q^{n-a-1}$.
    \item If $a<n-1$ and $b_{d-1}\geq 2$ then $\gamma_{d-1}-\gamma_{d}=-1$ and $\Delta=q^{n-a-2}(b_{d-1}-1)$.
  \end{itemize}
 \item if $b_{d-1}=q-2$ then $a_d=a_{d-1}+1$ and $b_d=0$. In this case
    \begin{itemize}
      \item If $a_{d-1}=n-1$ then $W_a^{(2)}(q,n,d-1)=3$, $W_a^{(2)}(q,n,d)=2$,
$W_h^{(1)}(q,n-1,d)=1$. Then $\Delta=0$.
      \item If $a_{d-1}<n-1$ then
$$\Delta= 2q^{n-a_{d-1}-1} + \gamma_{d-1}q^{n-a_{d-1}-2}-2q^{n-a_{d-1}-2}-q^{n-a_{d-1}-1}-\gamma_{d}q^{n-a_{d-1}-3},$$
$$\Delta= q^{n-a_{d-1}-2} \left(\strut q-2 + \gamma_{d-1}-\frac{\gamma_{d}}{q}\right).$$
      \begin{itemize}
         \item If $a_{d-1}=n-2$ and $q=2$ then $\gamma_{d-1}=2$, $\gamma_d=4$ and $\Delta=0$.
         \item If $a_{d-1}<n-2$ and $q=2$ then $\gamma_{d-1}=\gamma_d=2$ and $\Delta=q^{n-a_{d-1}-2}$.
         \item If $q=3$ then $\gamma_{d-1}=2$, $\gamma_d=3$ and $\Delta=2\times 3^{n-a_{d-1}-2}$.
         \item If $q\geq 4$ then $\gamma_{d-1}=q-3$, $\gamma_d=q(q-2)$ and $\Delta=q^{n-a_{d-1}-2}(q-3)$.
      \end{itemize}
   \end{itemize}
\end{itemize}
 \end{proof}

\begin{remark}
 In the previous lemma, $\Delta\geq 0$ is zero in the following cases:
\begin{itemize}
\item $q=3$, $a_{d-1}=n-1$ and $b_{d-1}=0$, namely $d=2(n-1)+1$.
\item $q=2$, $a_{d-1}=n-2$, namely $d=n-1$.
\item $a_{d-1}=n-1$, $b_{d-1}=q-2$, namely $d=n(q-1)$.
\end{itemize}
\end{remark}

\begin{theorem}\label{proj2}
Let $W_h^{(2)}(q,n,d)$ be the second weight for 
a homogeneous polynomial $f$ in $n+1$ variables ($n \geq2$)
of total degree $d$ ($2\leq d\leq n(q-1)$), with coefficients in ${\mathbb F}_q$,  
which is not maximal.
Then the following holds:
$$W_h^{(1)}(q,n-1,d)+W_a^{(2)}(q,n,d) \leq W_h^{(2)}(q,n,d) \leq  W_a^{(2)}(q,n,d-1).$$
Moreover
$$W_h^{(2)}(q,n,d) \geq \min\left(W_h^{(1)}(q,n-1,d)+W_a^{(3)}(q,n,d),W_a^{(2)}(q,n,d-1)\right).$$
\end{theorem}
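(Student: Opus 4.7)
The plan is to prove a matching upper bound and lower bound on $W_h^{(2)}(q,n,d)$ and then refine the lower bound. For the upper bound $W_h^{(2)}(q,n,d)\le W_a^{(2)}(q,n,d-1)$, I construct an explicit non-minimum projective codeword of that weight. Pick $g\in{\mathcal RP}(q,n,d-1)$ of degree $d-1$ with $W_a(g) = W_a^{(2)}(q,n,d-1)$ (existence guaranteed by the affine theory recalled in Section~\ref{secondw}), homogenise $g$ to a degree-$(d-1)$ polynomial $\tilde g$ in $n+1$ variables, and set $f = X_0\tilde g$. Then $Z_h(f)\supset\{X_0=0\}$ and the affine restriction is $g$, which is not a maximal affine hypersurface, so Theorem~\ref{proj1}(\ref{pt2}) certifies that $f$ is not minimum-weight. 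A direct count $N_h(f) = \#\{X_0=0\} + N_a(g)$ collapses to $W_h(f) = W_a^{(2)}(q,n,d-1)$.

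For the lower bound I would take an arbitrary non-minimum $f$ and split on whether $Z_h(f)$ contains a projective hyperplane. In the first case, after a change of coordinates $\{X_0=0\}\subset Z_h(f)$, so $f = X_0 f_1$ with $f_1$ homogeneous of degree $d-1$. Let $\widetilde{f_1}(X_1,\dots,X_n) = f_1(1,X_1,\dots,X_n)$. Since $f$ is not minimum-weight, Theorem~\ref{proj1}(\ref{pt2}) prevents $\widetilde{f_1}$ from being a maximal affine hypersurface, so $W_a(\widetilde{f_1})\ge W_a^{(2)}(q,n,d-1)$, and a direct count gives $W_h(f) = W_a(\widetilde{f_1})$, whence $W_h(f)\ge W_a^{(2)}(q,n,d-1)$. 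In the second case, I fix any projective hyperplane $H$: the restriction $f|_H$ is a non-zero degree-$d$ polynomial on $H\cong{\mathbb P}^{n-1}(q)$, so by Theorem~\ref{proj1} applied in dimension $n-1$ it has at most $(q^n-1)/(q-1) - W_h^{(1)}(q,n-1,d)$ zeros in $H$. For the affine complement, the dehomogenisation $\tilde f$ of $f$ along $H$ cannot be a maximum affine codeword: such a codeword would be a product of affine forms (Theorem~\ref{Del}), and Lemma~\ref{lemm2} would then force $Z_h(f)$ to contain a projective hyperplane, contradicting the case assumption. Hence $W_a(\tilde f)\ge W_a^{(2)}(q,n,d)$, and adding the two estimates yields $W_h(f)\ge W_h^{(1)}(q,n-1,d) + W_a^{(2)}(q,n,d)$.

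For the refined lower bound I would strengthen the second case using the classification of second-weight affine codewords from \cite{eric}, \cite{sbou}, \cite{ledu} (recalled in Section~\ref{secondw}): those codewords are also hyperplane arrangements. A second application of Lemma~\ref{lemm2} then rules out $\tilde f$ being a second-weight affine codeword in Case~2, forcing $W_a(\tilde f)\ge W_a^{(3)}(q,n,d)$ there. Taking the minimum over the two case bounds yields the refined inequality; Lemma~\ref{ineq} confirms that it subsumes the simple lower bound stated at the beginning.

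The main obstacle I anticipate is the degenerate situation (possible only when $d\ge q$) in which the dehomogenisation $\tilde f$ in the second case is non-zero as a polynomial but represents the zero function on ${\mathbb F}_q^n$. There, $Z_h(f)$ already contains the entire affine chart, and a separate double-counting argument on the hyperplanes of ${\mathbb P}^n(q)$ will be needed to show that $Z_h(f)$ must also contain a projective hyperplane, reducing that situation back to Case~1.
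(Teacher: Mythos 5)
Your proposal follows the paper's argument essentially step for step: the same case split on whether $Z_h(f)$ contains a projective hyperplane, the same use of Theorem \ref{proj1} in case 1 and of the restriction-to-$H$ bound plus Lemma \ref{lemm2} (first and second affine weights are hyperplane arrangements) in case 2, with Lemma \ref{ineq} ordering the two bounds; your explicit word $f=X_0\tilde g$ merely makes concrete the paper's remark that the case-1 equality ``actually occurs''. The degenerate situation you flag at the end (the dehomogenisation vanishing identically as a function, only possible for $d\geq q$) is passed over silently in the paper's proof as well, so your extra care there is a refinement rather than a divergence.
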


\begin{proof}
Remark first that by Lemma \ref{ineq} 
$$W_h^{(1)}(q,n-1,d)+W_a^{(2)}(q,n,d)\leq  W_a^{(2)}(q,n,d-1).$$
Let $f$ such that $Z_h(f)$ is not maximal.
Suppose first that there is a hyperplane $H$ in $Z_h(f)$.
Then we can suppose that 
$$f(X_0,X_1,\cdots,X_n)=X_0 g(X_0,X_1,\cdots,X_n)$$
where $g$ is a homogeneous polynomial of degree $d-1$.
The function 
$$f_1(X_1,\cdots,X_n)=g(1,X_1,\cdots,X_n)$$ defined on the affine space 
${\mathbb A}^n(q)={\mathbb P}^n(q)\setminus H$ is a polynomial function in $n$ variables
of total degree $d-1$. If it was maximum, by Theorem \ref{proj1}, the function $f$ 
would also be maximum.

Then $\# Z_a(f_1) \leq q^n -W_a^{(2)}(q,n,d-1)$. Hence the following holds:
$$\# Z_h(f) \leq \frac{q^{n}-1}{q-1}+q^n -W_a^{(2)}(q,n,d-1),$$
$$\# Z_h(f) \leq \frac{q^{n+1}-1}{q-1}-W_a^{(2)}(q,n,d-1),$$
and the equality holds if and only if $f_1$ reaches the second weight on the affine space ${\mathbb A}^n(q)$.
This case actually occurs. Hence for such a word, in general we have
$$W_h(f) \geq W_a^{(2)}(q,n,d-1),$$
and as the equality occurs, the following holds for the second distance:
$W_h^{(2)}(q,n,d) \leq W_a^{(2)}(q,n,d-1)$.

Suppose now that there is not any hyperplane in the hypersurface $Z_h(f)$. Let $H$ be a hyperplane
and ${\mathbb A}^n(q)={\mathbb P}^n(q) \setminus H$. Then as $H\cap Z_h(f) \neq H$
$$\# \left(H\cap Z_h(f)\right) \leq \frac{q^{n}-1}{q-1}-W_h^{(1)}(q,n-1,d).$$
We know that the first and second weight of a GRM code are  arrangements of hyperplanes, then by Lemma \ref{lemm2}
$$\# \left(\strut Z_h(f)\cap {\mathbb A}^n(q)\right)\leq q^n - W_a^{(3)}(q,n,d).$$

Now we can write
\begin{eqnarray*}
\# Z_h(f)&\leq&
\frac{q^{n}-1}{q-1}-W_h^{(1)}(q,n-1,d)
+ q^n- W_a^{(3)}(q,n,d)\\
&\leq&\frac{q^{n+1}-1}{q-1}
-\left (\strut W_h^{(1)}(q,n-1,d)+W_a^{(3)}(q,n,d)\right)
\end{eqnarray*}
and consequently
$$W_h(f) \geq W_h^{(1)}(q,n-1,d)+W_a^{(3)}(q,n,d)>W_h^{(1)}(q,n-1,d)+W_a^{(2)}(q,n,d).$$
Then, for the second distance the conclusion of the theorem holds.
 \end{proof}

Unfortunately we don't know the value of $W_a^{(3)}(q,n,d)$ and we don't know if the value of the sum
$W_h^{(1)}(q,n-1,d)+W_a^{(3)}(q,n,d)$ is greater than $W_a^{(2)}(q,n,d-1)$ or 
not. What is the exact value of $W_h^{(2)}(q,n,d)$? This question remains open. 


\begin{thebibliography}{10}

\bibitem{aske}
E.F Assmus and J.D. Key.
\newblock {\em {Designs and their Codes}}, volume 103 of {\em Cambridge Tracts
  in Mathematics}.
\newblock Cambridge University Press, 1992.

\bibitem{blmu}
I.F. Blake and R.C. Mullin.
\newblock {\em {The Mathematical Theory of Coding}}.
\newblock Academic Press, 1975.

\bibitem{brue1}
A.~Bruen.
\newblock {Polynomial Multiplicities over Finite Fields and Intersection Sets}.
\newblock {\em Journal of Combinatorial Theory}, 60(1):19--33, 1992.

\bibitem{brue2}
A.~Bruen.
\newblock {Applications of Finite Fields to Combinatorics and Finite
  Geometries}.
\newblock {\em Acta Applicandae Mathematicae}, 93(1--3), 2006.

\bibitem{brue3}
A.~Bruen.
\newblock {Blocking Sets and Low-Weight Codewords in the Generalized
  Reed-Muller Codes}.
\newblock In A.A. Bruen, D.L. Wehlau, and Canadian~Mathematical Society,
  editors, {\em Error-correcting Codes, Finite Geometries, and Cryptography},
  volume 525 of {\em Contemporary Mathematics}, pages 161--164. American
  Mathematical Society, 2010.

\bibitem{chro1}
J.-P. Cherdieu and R.~Rolland.
\newblock {On the Number of Points of Some Hypersurfaces in
  ${\mathbb{F}}_q^n$}.
\newblock {\em Finite Field and their Applications}, 2:214--224, 1996.

\bibitem{degoma}
P.~Delsarte, J.M. Goethals, and F.J. MacWilliams.
\newblock {On Generalized Reed-Muller Codes and their Relatives}.
\newblock {\em Information and Control}, 16:403--442, 1970.

\bibitem{dick}
L.~Dickson.
\newblock {\em {Linear Groups}}.
\newblock Dover Publications, 1958.

\bibitem{eric}
D.~Erickson.
\newblock {\em {Counting Zeros of Polynomials over Finite Fields}}.
\newblock PhD thesis, Thesis of the California Institute of Technology,
  Pasadena California, 1974.

\bibitem{geil1}
O.~Geil.
\newblock {On the Second Weight of Generalized {R}eed-{M}uller codes}.
\newblock {\em Designs,Codes and Cryptography}, 48(3):323--330, 2008.

\bibitem{kalipe}
T.~Kasami, S.~Lin, and W.~Peterson.
\newblock {New Generalizations of the Reed-Muller Codes Part I: Primitive
  Codes}.
\newblock {\em IEEE Transactions on Information Theory}, IT-14(2):189--199,
  March 1968.

\bibitem{katoaz}
T.~Kasami, N.~Tokura, and S.~Azumi.
\newblock {On the Weight Enumeration of Weights less than $2.5$d of
  {R}eed-{M}uller Codes}.
\newblock {\em Information and Control}, 30(4):380--395, 1976.

\bibitem{lach1}
G.~Lachaud.
\newblock {Projective Reed-Muller Codes}.
\newblock In {\em Coding Theory and Applications}, number 311 in Lecture Notes
  in Computer Science, pages 125--129. Springer-Verlag, 1988.

\bibitem{lastszvv}
M.~Lavrauw, L.~Storme, P.~Sziklai, and G.~Van~de Voorde.
\newblock {An Empty Interval in the Spectrum of Small Weight Codewords in the
  Code from Points and $k$-Spaces in $PG(n,q)$}.
\newblock {\em Journal of Combinatorial Theory}.

\bibitem{lastvv1}
M.~Lavrauw, L.~Storme, and G.~Van~de Voorde.
\newblock {On the Code Generated by the Incidence Matrix of Points and
  Hyperplanes in $PG(n,q)$ and its Dual}.
\newblock {\em Designs, Codes and Cryptography}, 48:231--245, 2008.

\bibitem{lastvv2}
M.~Lavrauw, L.~Storme, and G.~Van~de Voorde.
\newblock {On the Code Generated by the Incidence Matrix of Points and
  $k$-Spaces in $PG(n,q)$ and its Dual}.
\newblock {\em Finite Fields and their Applications}, 14:1020--1038, 2008.

\bibitem{ledu2}
E.~Leducq.
\newblock A new proof of delsarte, goethals and mac williams theorem on minimal
  weight codewords of generalized reed-muller codes.
\newblock {\em Finite Fields and their Applications}, 18(3), 2012.

\bibitem{ledu}
E.~Leducq.
\newblock {Second weight codewords of generalized Reed-Muller codes}.
\newblock {\em arXiv; 1203.5244}, 2012.

\bibitem{mcel}
R.~McEliece.
\newblock {Quadratic Forms over Finite Fields and Second-Order {R}eed-{M}uller
  Codes}.
\newblock Technical report, JPL Space Programs Summary III, 1969.

\bibitem{mero}
D.-J. Mercier and R.~Rolland.
\newblock {Polyn\^omes homog\`enes qui s'annulent sur l'espace projectif
  ${\mathbb P}^m({\mathbb{F}}_q)$}.
\newblock {\em Journal of Pure and Applied Algebra}, 124:227--240, 1998.

\bibitem{reta}
C.~Renter{\'\i}a and H.~Tapia-Recillas.
\newblock {Reed-Muller Codes: An Ideal Theory Approach}.
\newblock {\em Communications in Algebra}, 25(2):401--413, 1997.

\bibitem{rosb}
Fran\c{c}ois Rodier and Adnen Sboui.
\newblock Highest numbers of points of hypersurfaces over finite fields and
  generalized reed--muller codes.
\newblock {\em Finite Fields and Their Applications}, 14(3):816--822, July
  2008.

\bibitem{roll1}
R.~Rolland.
\newblock {Number of Points of Non-Absolutely Irreducible Hypersurfaces}.
\newblock In {\em Algebraic Geometry and its Applications}, volume~5 of {\em
  Number Theory and Its Applications}, pages 481--487. World Scientific, 2008.
\newblock Proceedings of the first SAGA Conference, 7-11 May 2007, Papeete.

\bibitem{roll2}
R.~Rolland.
\newblock {The Second Weight of Generalized Reed-Muller Codes in Most Cases}.
\newblock {\em Cryptography and Communications -- Discrete Structures, Boolean
  Functions and Sequences}, 2(1):19--40, 2010.

\bibitem{sbou}
A.~Sboui.
\newblock {Second Highest Number of Points of Hypersurfaces in
  ${\mathbb{F}}_q^n$}.
\newblock {\em Finite Fields and Their Applications}, 13(3):444--449, July
  2007.

\bibitem{schm}
Schmidt.
\newblock {\em Equations over Finite Fields: An elementary Approach}.
\newblock Number 536 in Lecture Notes in Mathematics. Springer Verlag, Berlin,
  Heidelberg, New York, 1976.

\bibitem{serr}
J.-P. Serre.
\newblock {Lettre \`a M. Tsfasman du 24 Juillet 1989}.
\newblock In {\em Journ\'ees arithm\'etiques de Luminy 17--21 Juillet 1989},
  Ast\'erisque, pages 198--200. Soci\'et\'e Math\'ematique de France, 1991.

\bibitem{sore1}
A.B. S{\o}rensen.
\newblock {A Note on Algorithms Deciding Rationality and Absolutely
  Irreducibility Based on the Number of Rational Solutions}.
\newblock {\em RISC-Linz Series}, 91-37.0, August 1991.

\bibitem{sore2}
A.B. S{\o}rensen.
\newblock {Projective Reed-Muller Codes}.
\newblock {\em Transactions on Information Theory}, IT-37(6):1567--1576, 1991.

\bibitem{vdv}
G.~Van~de Voorde.
\newblock {\em {Blocking Sets in Finite Projective Spaces and Coding Theory}}.
\newblock PhD thesis, Thesis Faculteit Wetenschappen Vakgroep Zuivere Wiskunde
  en Computeralgebra, 2010.

\end{thebibliography}
\end{document}